\newtheorem{definition}{Definition}
\newtheorem{lemma}{Lemma}
\newtheorem{proposition}{Proposition}
\newcommand{\orcidicon}{
    \scalerel*{
        \includegraphics{orcid.pdf}
    }{A}
}
\newcommand\orcid[1]{\href{https://orcid.org/#1}{\orcidicon}}
\begin{document}

\title{Nonlinear control and stability analysis of a unified Tethered UAV-winder system}


\author{Samuel O. Folorunsho \orcid{0000-0001-6386-9190},~\IEEEmembership{Member,~IEEE ,}
        Maggie~Ni, and~William R. Norris \orcid{0000-0002-4940-4458},~\IEEEmembership{Member,~IEEE}%


\IEEEcompsocitemizethanks{
    \IEEEcompsocthanksitem This research received no external funding.
    \IEEEcompsocthanksitem S.O Folorunsho and W.R Norris are with the Department
    of Industrial and Enterprise Systems Engineering, University of Illinois, Urbana-Champaign, IL 61801 (e-mails:~ sof3@illinois.edu,~wrnorris@illinois.edu). 
    \IEEEcompsocthanksitem M. Ni is with the Department of Aerospace Engineering, University of Illinois, Urbana-Champaign, IL 61801 (e-mail:~maggien2@illinois.edu).
    \IEEEcompsocthanksitem Manuscript received October 1, 2024.
}}%


\markboth{IEEE Transactions on Control Systems Technology,~Vol.~X, No.~Y, September, 2024}%
{Folorunsho \MakeLowercase{\textit{et al.}}: Nonlinear control and stability analysis of a unified Tethered UAV-winder system}

\maketitle

\begin{abstract}
This paper presents the development of a comprehensive dynamics and stabilizing control architecture for Tethered Unmanned Aerial Vehicle (TUAV) systems. The proposed architecture integrates both onboard and ground-based controllers, employing nonlinear backstepping control techniques to achieve asymptotic stability of the TUAV's equilibrium. The onboard controllers are responsible for the position and attitude control of the TUAV, while the ground controllers regulate the winder mechanism to maintain the desired tether length, ensuring it retains its catenary form. Simulation results demonstrate the ability of the TUAV system to accurately track linear and circular trajectories, ensuring robust performance under various operational scenarios. The code and movies demonstrating the performance of the system can be found at \href{https://github.com/sof-danny/TUAV\_system\_control}{https://github.com/sof-danny/TUAV\_system\_control.}

\end{abstract}

\begin{IEEEkeywords}
Tethered Unmanned Aerial Vehicle (TUAV), nonlinear backstepping control, catenary tether shape, stability of nonlinear systems.
\end{IEEEkeywords}

\section{Introduction}
\IEEEPARstart{U}{nmanned} Aerial Vehicles (UAVs) have a wide range of applications across a diverse field of industries and scientific disciplines; some notable ones include agriculture \cite{Patel_2013, Hasan_2020}, environmental monitoring \cite{Alvear_2017}, and surveillance \cite{Siebert_2014, Sampedro_2019}. However, their versatility is hindered by their insufficient flight times \cite{folorunsho2024redefiningaerialinnovationautonomous}. This limitation is particularly significant as the applicability of UAVs can be extended given prolonged periods of operation to collect meaningful data or maintain continuous observation. 

Tethered Unmanned Aerial Vehicles (TUAVs) are a viable and popular alternative to conventional UAVs due to their unique ability to remain airborne for extended periods while connected to a ground power station. This tethering provides a constant power supply, eliminating the need for frequent battery changes and allowing for real-time data transmission. TUAVs have been successfully applied in fields such as navigation \cite{Papachristos_2014, Miki_2019, Xiao_2018}, surveillance \cite{Dufek_2017, Xiao_2017}, meteorology ~\cite{Rico_2021, Qi_2019, Carrozzo_2019, Rico_2020}, and telecommunications \cite{Saif_2021, Safwat_2022, Vishnevsky_2019}. Studies have demonstrated their effectiveness in maintaining a stable presence for extended periods, which is particularly valuable for tasks such as persistent aerial monitoring and high-altitude communication. Research has explored innovative use cases of TUAVs such as providing backhaul connectivity in post-disaster scenarios \cite{Saif_2021} and their effectiveness in prolonged atmospheric data collection \cite{Rico_2021}. TUAVs have also been used in collaboration with Unmanned Ground Vehicles (UGVs) for assisted mapping and navigation in challenging environments ~\cite{Papachristos_2014, Miki_2019}. The concept has been extended to water vehicles with TUAVs monitoring aquatic environments for drowning victims \cite{Dufek_2017}, manipulating an ocean buoy \cite{Kourani_2023}, and detecting oil slicks ~\cite{Muttin_2011}.

With the increasing deployment of TUAVs, a significant area of research has focused on the stability and control of these systems, particularly due to the complex dynamics introduced by the tether, especially when it assumes a non-taut catenary configuration. In this paper, a comprehensive control system is proposed that includes: (1) a three-dimensional dynamical model of the UAV, (2) a unified mathematical representation of the TUAV-winder system with the tether modeled as a catenary, and (3) both aerial and ground-based controllers utilizing nonlinear backstepping techniques. The onboard controllers managed the TUAV's position and attitude, while the ground controllers regulated the winder mechanism to maintain the desired tether length, ensuring it retains its catenary form. Using the developed controllers, asymptotic stability is achieved, ensuring both set point regulation and trajectory tracking. Simulation results are provided to validate the effectiveness of the proposed control architecture.

\section{Related Works}
Designing control systems for Tethered Unmanned Aerial Vehicles (TUAVs) requires addressing several critical challenges, including maintaining system stability in the presence of complex tether dynamics, ensuring robust performance under varying wind conditions, and adhering to safety constraints imposed by the tether. The control algorithms and strategies must also be tailored to the specific operational requirements of the TUAV.

A significant body of literature has explored these challenges through the development of various control models. However, many of these studies simplify the tether as a straight, massless, and taut line, which does not accurately capture the full dynamics of a tethered system \cite{Nicotra_2017, Kourani_2023, Glick_2018}. For instance, a unified model of the integrated ground-aerial system has been proposed, utilizing a Model Predictive Control (MPC) algorithm to achieve optimal control under state and input constraints~\cite{Glick_2018}. Similarly, an MPC scheme based on velocity kinematics and a constant curvature assumption has been presented \cite{Chien_2021}. In another study, a backtracking Reference Governor (RG) algorithm was employed~\cite{Nicotra_2017}. Notably, these models are restricted to planar (2D) representations of the TUAV, which limits their ability to capture the full spatial dynamics of the system.

A feedforward and Proportional-Derivative (PD) position controller was implemented to address control challenges in precision agriculture applications, \cite{Glick_2016}. Proportional-Integral-Derivative (PID) controllers are frequently utilized in the literature due to their simplicity, effectiveness, and robustness \cite{Dufek_2017, Dicembrini_2020, Abantas_2022, Xiao_2018, Xiao_2019}. However, PID controllers have notable limitations in managing the complexities of dynamic systems, such as a lack of adaptability, the potential for oscillatory responses \cite{Chien_2021}, and difficulties in tuning ~\cite{Abantas_2022}.

Given the multi-component nature of a TUAV system, which includes the UAV, the tether, and the winder, there is a need for a control system that can accommodate its inherent complexity. Linear controllers like the one employed in \cite{liu2021dynamic} often underperform in scenarios involving wide operational ranges or significant model uncertainties \cite{Kokotovic_1991, Slotine_1991}. Despite this, the application of nonlinear control methods to TUAVs remains relatively uncommon. This oversimplification is a common issue in many publications that reduce the mathematical model to 2D or fail to incorporate the winder, tether, and UAV as an integrated system.

This research aims to address these limitations by applying nonlinear backstepping control, a method particularly well-suited for handling the complex dynamics and nonlinearities characteristic of TUAV operations \cite{Madani_2006, Al-Younes_2010}.

\section{Preliminaries}
Let \( \mathbb{R}^{>0} \) denote the set \(\{x \in \mathbb{R} : x > 0\}\), and let \( \mathbb{R}^{\geq 0} \) denote the set \(\{x \in \mathbb{R} : x \geq 0\}\). Also, let \(S(\cdot)\), \(C(\cdot)\), and \(T(\cdot)\) represent the sine, cosine, and tangent functions for some angle \((\cdot)\) respectively. In addition, let \(\|\cdot\|\) denote the \(L_2\) norm.

A system's equilibrium is said to be asymptotically stable if, for every \(\epsilon \in \mathbb{R}^{>0}\), there exists a \(\delta \in \mathbb{R}^{>0}\) such that if the initial condition satisfies \(\|x(0)\| < \delta\), then \(\|x(t)\| < \epsilon\) for all \(t \geq 0\), and \(\|x(t)\| \to 0\) as \(t \to \infty\).

\begin{figure}[ht]
\centering
\includegraphics[width=0.5\textwidth]{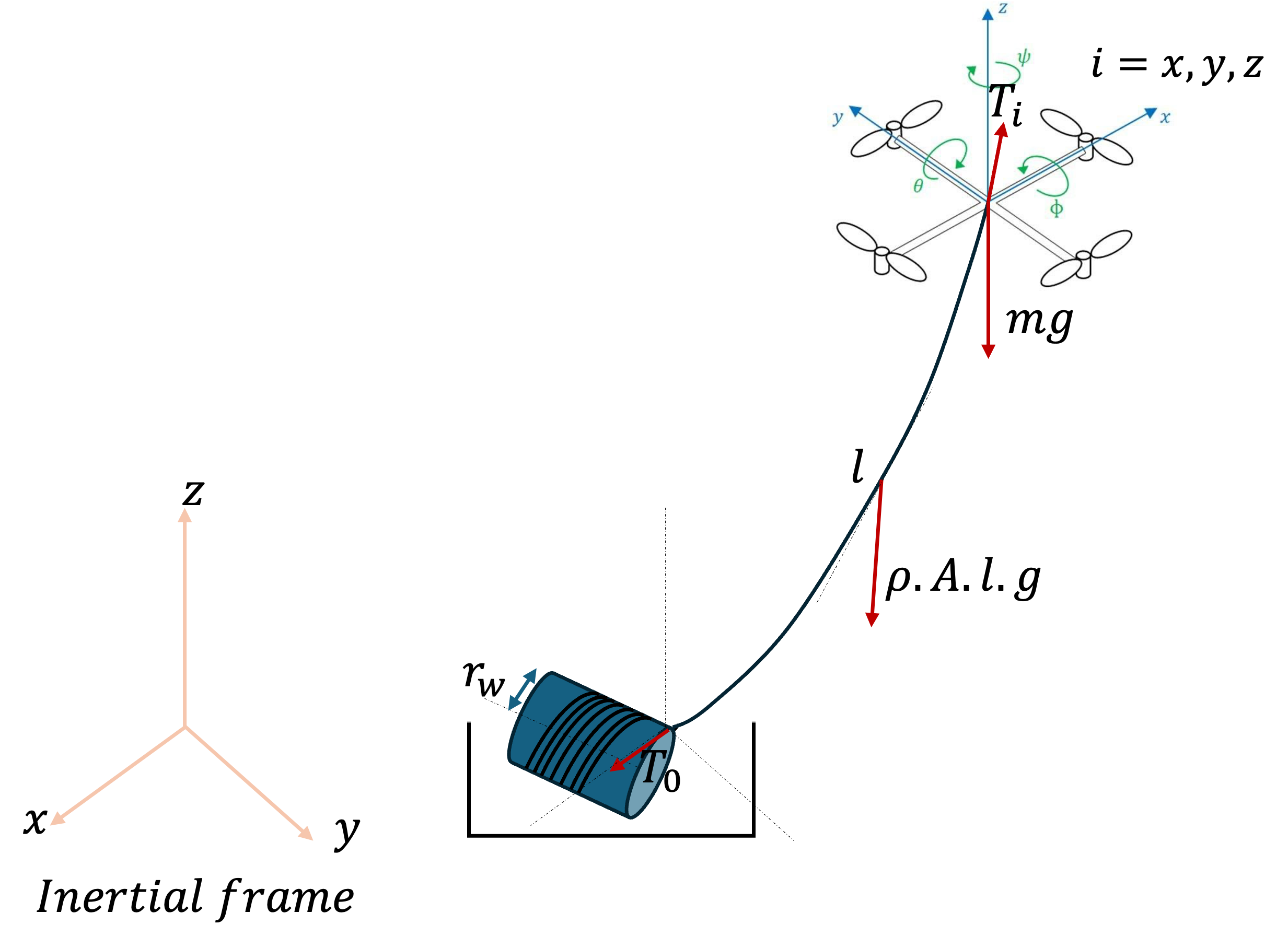}
\caption{The tethered drone system with winder}
\captionsetup{justification=centering}
\label{tuavsys}
\end{figure}

\section{System Description}
\label{sysdes}
Consider the system represented by Fig. \ref{tuavsys} of an Unmanned Aerial Vehicle (UAV) of mass \( m \in \mathbb{R}^{>0} \) and moments of inertia \(I \in \mathbb{R}_{>0}^3\)
, physically connected to a winder of radius \( r_w \in \mathbb{R}^{>0} \) located at a ground station in an inertial frame via a tether of density \( \rho \in \mathbb{R}^{>0} \) in a catenary form. Two frames of reference are considered: the inertial frame whose origin is located at the pivot point where the tether is attached to the winder, with the third axis pointing upward, \( \mathbf{g} \), and the body-fixed frame whose origin is located at the mass center of the UAV. The UAV is acted upon by its own weight, drag forces in the \( x \), \( y \), and \( z \) directions, the tension in the cable,\( \mathbf{T} \)  and the weight of the tether depending on the material of the cable.

\subsection{The tether model}
Consider a tether in a catenary form governed by the equation:
\[ z(x) = a \cosh\left(\frac{x}{a}\right), \quad a \in \mathbb{R}^{>0} \]
where \(a\) is a parameter related to the length of the cable and the distance between the points it is attached to. The catenary parameter \(a\) is given by:
\[ a = \frac{T_0}{\rho g}, \quad T_0, \rho, g \in \mathbb{R}^{>0} \]
where \( T_0 \) is the initial horizontal tension, \( \rho \) is the linear density of the tether, and \( g \) is the gravitational acceleration.

For a section of the tethered cable with horizontal coordinates from \(0\) to \(x\), the tension at the top of the tether \(T_1\) can be described as in equation \ref{ten1} \cite{liu2021dynamic}. 

\begin{equation}
\label{ten1}
\begin{aligned}
    T_1 &= T_0 + \rho A z g, \quad T_1 , A, z \in \mathbb{R}^{>0}
\end{aligned}
\end{equation}

where \( A \) is the cross-sectional area of the tether, and \( z \) is the vertical component of the tether's position.

The length of the tether at any point can be described by the transcendental equation \cite{talke2018catenary}:

\begin{equation}
\label{eqn2}
L = \sqrt{2a\sinh\left(\frac{x}{2a}\right) + z^2}, \quad L \in \mathbb{R}^{>0}
\end{equation}

Assuming the pulling force of the tether on the center of mass of the UAV at a hovering moment is \( T_1 \), this force can be resolved into its \( x \), \( y \), and \( z \) components as:
\[ \mathbf{T} = \begin{bmatrix} T_X & T_Y & T_Z \end{bmatrix}^T, \quad \mathbf{T} \in \mathbb{R}^3 \]

where the components are given by:
\[ 
T_X = T_1 \cos \alpha \sin \beta \]
\[T_Y = T_1 \cos \alpha \cos \beta \]
\[T_Z = T_1 \sin \alpha \]

with:
\[ 
\beta = \arctan\left(\sinh\left(\frac{x - x_0}{a}\right)\right) \]

\[\alpha = \arctan\left(\sinh\left(\frac{-x_0}{a}\right)\right) \]

\subsection{The winder dynamics}
The tether is connected at one end to a winder. Let \( \vartheta \in \mathbb{R} \) and \( \dot{\vartheta} \in \mathbb{R} \) represent the angular position and velocity of the winch, respectively. The mass of the winch without the tether is \( m_w \in \mathbb{R}_{>0} \), and the mass of the tether per unit length is \( \rho \in \mathbb{R}_{>0} \). The maximum tether length is \( L_{T} \in \mathbb{R}_{>0} \), and the effective radius of the winch is \( r_w \in \mathbb{R}_{>0} \). The inner radius of the winch drum is \( r_i \in \mathbb{R}_{>0} \). The viscous friction coefficient of the winch is \( \beta_w \in \mathbb{R}_{>0} \), and the winch torque (control input) is \( U_{win} \in \mathbb{R} \). The tether stiffness constant is \( K_t \in \mathbb{R}_{>0} \). The position of the UAV is \( \mathbf{P}_{1} \in \mathbb{R}^3 \), and the position of the ground station (winch base) is \( \mathbf{P}_{0} = (0, 0, 0) \).

The mass of the winch, including the unreleased tether, is computed as:
\begin{equation}
m_w = \bar{m}_w + \left( L_{T} - r_w \vartheta \right) \rho
\end{equation}

where $\bar{m}_w$ is the mass of the winch without the tether.

Considering a hollow drum with inner radius \( r_i \), the moment of inertia of the winch is approximated as:

\begin{equation}
I_w = \frac{1}{2} m_w \left( r_w^2 + r_i^2 \right)
\end{equation}

The pulling force vector \( \mathbf{F}_{p} \) exerted by the tether is computed on the basis of its stiffness \( K_t \) and its elongation \( e_{t} \), computed as:
\[
e_{t} = \max \left( 0, \| \mathbf{P}_{1} - \mathbf{P}_{0} \| - r_w \vartheta \right)
\]
\[
\mathbf{F}_{p} = K_t e_{t} \frac{\mathbf{P}_{1} - \mathbf{P}_{0}}{\| \mathbf{P}_{1} - \mathbf{P}_{0} \|}
\]

From the equilibrium of moments around the rotational axis, the equation of motion for the winch is obtained as in \cite{fagiano2017systems}:

\begin{equation}
\ddot{\vartheta} = \frac{1}{I_w} \left( r_e \| \mathbf{F}_{p} \| - \beta_w \dot{\vartheta}(t) + r_w U_{win} \right)
\end{equation}

\textbf{Assumption 1: The cable is inelastic.}

Given this assumption, the pulling force \( \mathbf{F}_{p} \) exerted by the tether is zero. Thus:
\[
\mathbf{F}_{p} = 0
\]

By this assumption, the equation of motion for the winch reduces to:

\begin{equation}
\ddot{\vartheta} = \frac{1}{I_w} \left( - \beta_w \dot{\vartheta} + r_w U_{win} \right)
\end{equation}

\subsection{The UAV model}
Consider the position vector of the center of mass of the UAV defined by $\boldsymbol{\xi} = (x,y,z) \in \mathbb{R}^3$ and the UAV’s Euler angles (the orientation) are expressed by $\boldsymbol{\eta} = (\psi, \theta, \phi) \in \mathbb{R}^3$, where $\psi$ is the yaw angle around the z-axis, $\theta$ is the pitch angle around the y-axis, and $\phi$ is the roll angle around the x-axis. Let the generalized coordinates of the UAV then be expressed by $\boldsymbol{q} = (\boldsymbol{\xi}, \boldsymbol{\eta}) \in \mathbb{R}^6$.

The translational kinetic energy,

\begin{equation}
T_{\text{trans}} = \frac{1}{2}m \dot{\boldsymbol{\xi}}^T \dot{\boldsymbol{\eta}},
\end{equation}
while the rotational kinetic energy,
\begin{equation}
T_{\text{rot}} = \frac{1}{2}\boldsymbol{\omega}^T \mathbf{I} \boldsymbol{\omega},
\end{equation}
where $\boldsymbol{\omega}$ is the vector of the angular velocity and $\mathbf{I}$ is the inertia matrix. The potential energy of the UAV,
\begin{equation}
\mathcal{P} = mgz,
\end{equation}
with $g$ as acceleration due to gravity.

Therefore, the Lagrangian, with $\mathcal{T}$ as the sum of $T_\text{rot}$ and $T_\text{trans}$ is:
\[
\mathcal{L} = \mathcal{T} - \mathcal{P}.
\]

The angular velocity vector $\boldsymbol{\omega}$ resolved in the body-fixed frame is related to the generalized velocities $\dot{\boldsymbol{\eta}}$ through this relationship:
\begin{equation}
\boldsymbol{\omega} = \mathbf{W}_{\boldsymbol{\eta}} \dot{\boldsymbol{\eta}},
\end{equation}
where
\[
\mathbf{W}_{\boldsymbol{\eta}} = \begin{bmatrix}
-\sin \theta & 0 & 1 \\
\cos \theta \sin \phi & \cos \phi & 0 \\
\cos \theta \cos \phi & -\sin \phi & 0
\end{bmatrix}.
\]

Hence,
\begin{equation}
T_{\text{rot}} = \frac{1}{2} \dot{\boldsymbol{\eta}}^T \mathbf{J} \dot{\boldsymbol{\eta}},
\end{equation}
with
\[
\mathbf{J} = \mathbf{W}_{\boldsymbol{\eta}}^T \mathbf{I} \mathbf{W}_{\boldsymbol{\eta}},
\]
and
\[
\mathbf{I} = \begin{bmatrix}
I_{xx} & 0 & 0 \\
0 & I_{yy} & 0 \\
0 & 0 & I_{zz}
\end{bmatrix}.
\]

The matrix $\mathbf{J} = \mathbf{J}(\boldsymbol{\eta})$ is therefore expressed directly in terms of the generalized coordinates $\boldsymbol{\eta}$.

The Euler-Lagrange equation with external forces is given by:
\[
\frac{d}{dt} \left( \frac{\partial L}{\partial \dot{\boldsymbol{q}}} \right) - \frac{\partial L}{\partial \boldsymbol{q}} = \begin{bmatrix} F_i \ (i = x, y, z) \\ \boldsymbol{\tau} \end{bmatrix},
\]
where $F_i = \mathbf{R} \mathbf{f} \in \mathbb{R}^3$ represent the translational forces due to the main thrust including the tether tension. $\boldsymbol{\tau} \in \mathbb{R}^3$ is the yaw, pitch, and roll moments, and $\mathbf{R}$ is the rotation matrix from the body frame to the inertial frame, defined by:
\[
\mathbf{R} = \begin{bmatrix}
C_\theta C_\psi & C_\psi S_\theta S_\phi - C_\phi S_\psi & S_\phi S_\psi + C_\phi C_\psi S_\theta \\
C_\psi & C_\phi C_\psi + S_\theta S_\phi S_\psi & C_\phi S_\theta S_\psi - C_\psi S_\phi \\
-S_\theta & C_\theta S_\phi & C_\theta C_\phi
\end{bmatrix}.
\]

Based on this, the dynamics of the UAV , with the tether and the winder are obtained \cite{carrillo2012quad, liu2021dynamic} as in equation \ref{sys}:

\begin{equation}
\label{sys}
    \begin{cases}
        \ddot{x} = \frac{U_f (C_\psi C_\phi S_\theta + S_\psi S_\phi) + mrv - mqw - mgS_\theta - T_1 C_\alpha S_\beta - A_{x}u}{m} \\
        \ddot{y} = \frac{U_f (C_\phi S_\psi S_\theta - C_\psi S_\phi) + mru - mpw - mgC_\theta S_\phi + T_1 C_\alpha - A_{y}v}{m} \\
        \ddot{z} = \frac{U_f C_\theta C_\phi + mqu - mpv + mgC_\theta C_\phi - T_1 S_\alpha - A_{z}w}{m} \\
        \ddot{\phi} = \frac{U_\phi - qr(I_{yy} - I_{zz})}{I_{xx}} \\
        \ddot{\theta} = \frac{U_\theta + pr(I_{xx} - I_{zz})}{I_{yy}} \\
        \ddot{\psi} = \frac{U_\psi - pq(I_{xx} - I_{yy})}{I_{zz}} \\
        \ddot{\vartheta} = \frac{1}{I_w} \left( - \beta_w \dot{\vartheta} + r_w U_{win} \right)
    \end{cases}
\end{equation}

Where \( u = \dot{x} \), \( v = \dot{y} \), \( w = \dot{z} \), \( p = \dot{\phi} \), \( q = \dot{\theta} \), \( r = \dot{\psi} \) are the time derivatives of the position and the Euler angles. The drag acting in the \( x, y, z \) directions is represented as \( A_x, A_y, A_z \). Additionally, \( U_f \) is denoted as the lift input, \( U_\phi \), the roll input, \( U_\theta \), the pitch input, and \( U_\psi \) as the yaw input.

\subsection{Control Objectives}
The primary objective of this paper is to formulate and implement a nonlinear control strategy for a tethered Unmanned Aerial Vehicle (UAV) system, where the tether is modeled as a catenary (non-taut) configuration and includes dynamic effects from a winder mechanism. To achieve this, a nonlinear backstepping control methodology is employed, tailored to accommodate the complexities of the system's nonlinear dynamics. The proposed control scheme is evaluated through numerical simulations for trajectory tracking and set-point regulation tasks, ensuring robust performance across various operating conditions.

\section{Controller design and stability analysis}
\label{control}
In this section, the backstepping controller for both the onboard and the ground control is developed. The onboard controllers are responsible for the position and attitude control of the TUAV, while the ground control is responsible for the winder and maintaining the catenary shape of the tether. The overall control architecture is shown in Fig.~\ref{control_arch}.

\begin{figure*}[ht]
\centering
\includegraphics[width=0.6\textwidth]{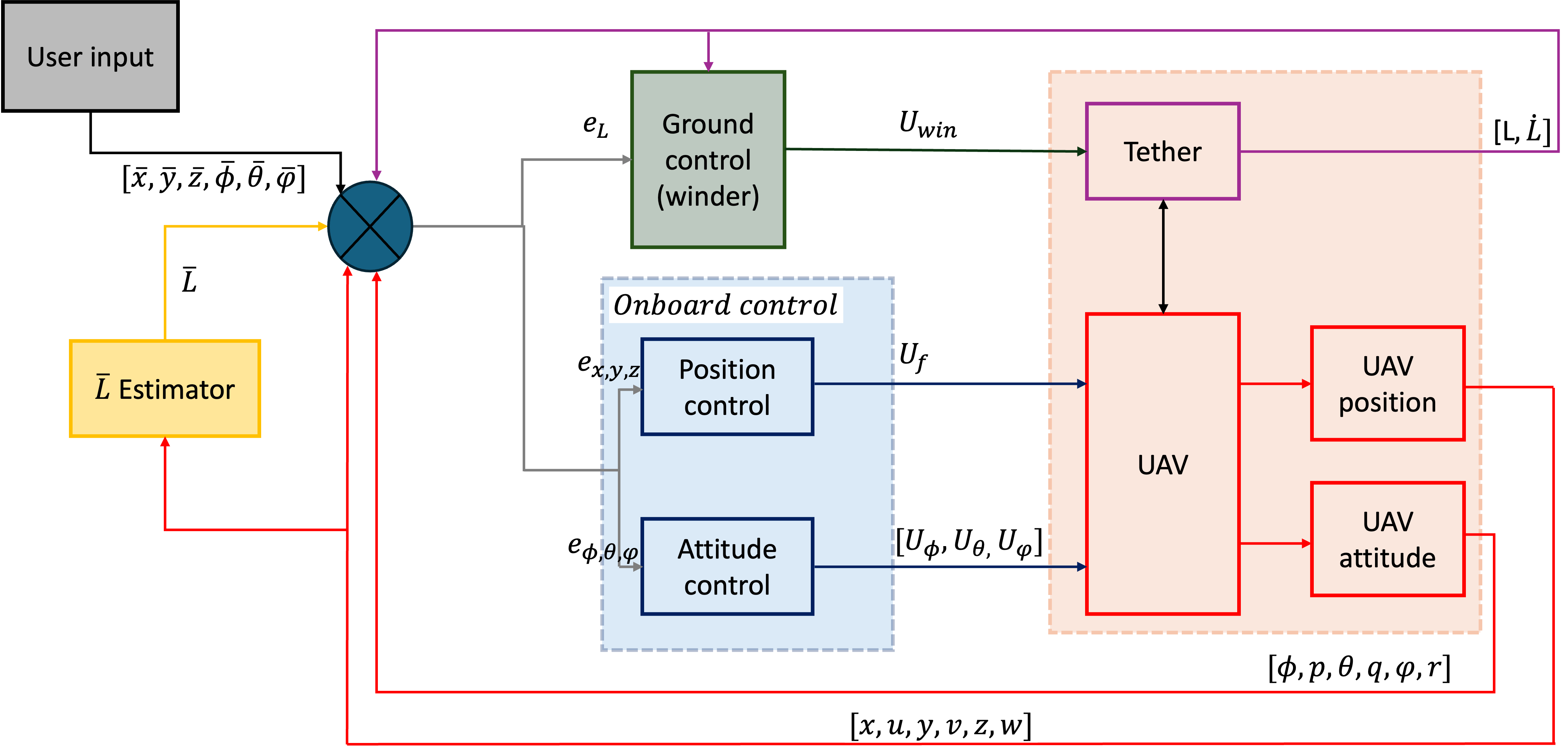}
\caption{Overall control system architecture}
\label{control_arch}
\end{figure*}

The \(\bar{L}\) Estimator block computes the desired tether length \(\bar{L} \in \mathbb{R}^{\geq 0}\) based on the current position of the UAV \((x, y, z) \in \mathbb{R}^3\). This calculation ensures that the tether length corresponds accurately to the TUAV's current position, maintaining the desired catenary shape. The user inputs the desired position \((\bar{x}, \bar{y}, \bar{z}) \in \mathbb{R}^3\) and desired attitude \((\bar{\phi}, \bar{\theta}, \bar{\psi}) \in \mathbb{R}^3\). The summing junction receives these inputs and the current position \((x, y, z)\), and the desired tether length \(\bar{L}\). It computes the errors in position, attitude, and tether length. Let \(\mathbf{e_{x,y,z}} \in \mathbb{R}^3\) represent the position error, \(\mathbf{e_{\phi,\theta,\psi}} \in \mathbb{R}^3\) represent the attitude error, and \(e_L \in \mathbb{R}\) represent the tether length error. These errors are then fed into the respective controllers to adjust the UAV's behavior accordingly.

The Ground Control, or winder, receives the tether length error \(e_L\) from the Reference Generator and adjusts the tether length \(L \in \mathbb{R}^{\geq 0}\) to minimize this error, ensuring that the tether maintains the appropriate catenary shape as the UAV moves. The Onboard Control system is divided into two main controllers: Position Control and Attitude Control. Position Control receives the current position \((x, y, z)\) of the UAV and sends commands to adjust the UAV's position to match the desired trajectory \((\bar{x}, \bar{y}, \bar{z})\), ensuring the UAV moves accurately to the target coordinates. Attitude Control receives the current attitude \((\phi, \theta, \psi) \in \mathbb{R}^3\) of the UAV and sends commands to adjust its attitude, maintaining stability and proper orientation during flight.

\subsection{Altitude controller}
First, the focus is on the altitude control of the system described by equation \ref{sys}.

The \( z \) position of the UAV can be described in state-space form. Let \( x_5 \in \mathbb{R}_{\geq 0} \), with \( 0 \leq x_5 \leq L_T \), represent the state corresponding to the \( z \) position of the UAV, and let \( x_6 \) denote the time derivative of this state. The resulting state-space equations can then be described as follows:

\begin{equation}\label{mainsys1}
\begin{aligned}
\dot{x}_5 &= x_6 \\
\dot{x}_6 &= \frac{1}{m}\left( U_f \cos x_9 \cos x_7 +
m x_{10} x_{12} - m x_8 x_4 \right) \\
&\quad + \frac{1}{m}\left( mg \cos x_9 \cos x_7 + T_z - A_z x_6 \right)
\end{aligned}
\end{equation}

A virtual controller for the system is designed as a starting point.

\begin{definition}
Let \( x_6 \) be the input to this system, and define \( x_6 = \phi(x_5) \) such that \( \phi(0) = 0 \). This choice of \( \phi \) ensures that the origin of the system is an equilibrium point and guarantees asymptotic stability (AS).

\end{definition}

\begin{proposition}
The choice of $\phi(x_5) = -k_1 x_5$ is made.
\end{proposition}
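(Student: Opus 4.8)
The plan is to verify that the proposed virtual law $\phi(x_5) = -k_1 x_5$, with the gain $k_1 \in \mathbb{R}^{>0}$, satisfies the two requirements imposed in the preceding Definition: that $\phi(0) = 0$, so that the origin is rendered an equilibrium of the reduced $x_5$-subsystem, and that this subsystem is asymptotically stable once we substitute the virtual input $x_6 = \phi(x_5)$. The first requirement is immediate, since $\phi(0) = -k_1 \cdot 0 = 0$. The substance of the argument is therefore the asymptotic stability claim for the scalar closed loop.

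First I would treat $x_6$ as the virtual control input for the kinematic equation $\dot{x}_5 = x_6$ and close the loop with the proposed law, which yields the linear autonomous system $\dot{x}_5 = -k_1 x_5$. Next I would introduce the Lyapunov candidate $V(x_5) = \tfrac{1}{2} x_5^2$, which is positive definite and radially unbounded. Differentiating along the closed-loop trajectories gives $\dot{V} = x_5 \dot{x}_5 = -k_1 x_5^2$, which is negative definite for every $k_1 > 0$. By Lyapunov's direct method this establishes that the origin of the $x_5$-subsystem is asymptotically stable; equivalently, one may simply exhibit the closed-form solution $x_5(t) = x_5(0)\, e^{-k_1 t}$, which satisfies the $\epsilon$--$\delta$ characterization of asymptotic stability recorded in the Preliminaries and decays to zero as $t \to \infty$.

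There is no serious obstacle in the scalar step itself; the only point requiring care is to state explicitly that $k_1 > 0$, since the sign of the gain is precisely what makes $\dot{V}$ negative definite rather than merely semidefinite or indefinite. The more delicate issue, which I would flag but defer, is that $\phi(x_5)$ is only a virtual control: the true plant input enters through $U_f$ in the $\dot{x}_6$ equation of (\ref{mainsys1}), so $x_6$ cannot be forced equal to $\phi(x_5)$ instantaneously. The genuine stability guarantee for the altitude dynamics will come from the subsequent backstepping step, in which one defines the error coordinate $z_2 = x_6 - \phi(x_5)$, augments the Lyapunov function to $V_2 = \tfrac{1}{2} x_5^2 + \tfrac{1}{2} z_2^2$, and designs $U_f$ so that $\dot{V}_2$ is negative definite. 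The present Proposition should be read as fixing the stabilizing virtual law for that procedure, and the proof needed here is exactly the verification of $\phi(0) = 0$ together with the one-line Lyapunov computation above.
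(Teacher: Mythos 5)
Your proof is correct and follows essentially the same route as the paper: the Lyapunov candidate $V = \tfrac{1}{2}x_5^2$ applied to the closed-loop scalar dynamics $\dot{x}_5 = -k_1 x_5$, giving $\dot{V} = -k_1 x_5^2 < 0$. Your explicit insistence that $k_1 > 0$ (rather than the paper's later, weaker $k_1 \geq 0$) and your remark that the guarantee is only for the virtual subsystem are both accurate refinements, but the substance of the argument is the same.
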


\begin{proof}
\label{lyapunovC}
Choosing a Lyapunov candidate, \( V_{1} : \mathbb{R}^n \to \mathbb{R} \), where \( V(x) > 0 \) for all \( x \neq 0 \) and \( V(0) = 0 \). 
\begin{equation}
\label{ly1}
V_1 = \frac{1}{2} x_5^2.
\end{equation}

\[
\dot{V}_1 = x_5 \dot{x}_5 = x_5 \cdot (-k_1 x_5) = -k_1 x_5^2 < 0 
\]

This shows asymptotic stability. Hence, the chosen controller \( -k_1 x_5 \) will stabilize the system.
\end{proof}

Now, consider a transformation, \( z_1 = x_6 - \phi(x_5) \).

\[
x_6 = z_1 + \phi(x_5)
\]

The time derivative of this transformation is given by:

\[
\dot{z}_1 = \dot{x}_6 - \dot{\phi}(x_5)
\]

For simplicity, let \(\varsigma = \dot{x}_6 \) and \(\varsigma - \dot{\phi}(x_5) = \varkappa\)

Then,

\begin{equation*}
\label{tr1}
\dot{z}_1 = \varsigma - \dot{\phi}(x_5) = \varkappa
\end{equation*}

\begin{lemma}
\label{lemma1}
The asymptotic stability of the transformed system, with the state variables defined as in Equation \ref{ts1}, can be established using a composite Lyapunov function that follows the definition in \ref{lyapunovC}. This Lyapunov function \( V(x) \) is constructed by combining the original Lyapunov function in \ref{ly1} with an additional Lyapunov function that incorporates the state transformation. Specifically, \( V(x) \) is formulated for system in equation \ref{ts1} as follows:
\end{lemma}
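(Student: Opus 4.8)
The plan is to carry out the standard integrator-backstepping step, building directly on the virtual-control analysis already established. Since $V_1 = \tfrac{1}{2}x_5^2$ renders the $x_5$-subsystem asymptotically stable under the virtual law $\phi(x_5) = -k_1 x_5$, I would augment it with a quadratic penalty on the transformation error $z_1 = x_6 - \phi(x_5)$ and propose the composite candidate
\[
V = V_1 + \tfrac{1}{2}z_1^2 = \tfrac{1}{2}x_5^2 + \tfrac{1}{2}z_1^2 ,
\]
which is positive definite, vanishes only at $(x_5,z_1)=(0,0)$, and therefore meets the requirements of the definition preceding \eqref{ly1}.

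Next I would differentiate $V$ along the transformed dynamics. Using $\dot{x}_5 = x_6 = z_1 + \phi(x_5) = z_1 - k_1 x_5$, the first term contributes $x_5\dot{x}_5 = -k_1 x_5^2 + x_5 z_1$. For the second term I would substitute $\dot{z}_1 = \dot{x}_6 - \dot{\phi}(x_5)$, where $\dot{x}_6$ is the full nonlinear right-hand side from the second line of \eqref{mainsys1} (carrying $U_f$, the gyroscopic products, the gravity term, the tether component $T_z$, and the drag $-A_z x_6$) and $\dot{\phi}(x_5) = -k_1 x_6$. This yields
\[
\dot{V} = -k_1 x_5^2 + x_5 z_1 + z_1\bigl(\dot{x}_6 - \dot{\phi}(x_5)\bigr).
\]

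The decisive step is to select the actual input $U_f$ so that the bracketed quantity satisfies $\dot{x}_6 - \dot{\phi}(x_5) = -x_5 - k_2 z_1$ for some gain $k_2 > 0$. With this choice the indefinite cross term $x_5 z_1$ cancels exactly, leaving
\[
\dot{V} = -k_1 x_5^2 - k_2 z_1^2 < 0 \quad\text{for all } (x_5,z_1)\neq(0,0),
\]
which is negative definite and certifies asymptotic stability by the Lyapunov argument used for $V_1$. I expect the principal obstacle to be algebraic rather than conceptual: solving the identity $\dot{x}_6 - \dot{\phi}(x_5) = -x_5 - k_2 z_1$ for $U_f$ requires dividing by the coefficient $\tfrac{1}{m}\cos x_9 \cos x_7$ that multiplies $U_f$ in \eqref{mainsys1}, so one must confirm $\cos x_9 \cos x_7 \neq 0$ (the pitch and roll remain clear of $\pm\pi/2$) for the feedback law to be well defined. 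Carrying the tether term $T_z$ and the drag cleanly through this inversion, and verifying that the resulting $U_f$ stays admissible over the operating range, is the part that warrants the most care.
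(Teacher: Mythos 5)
Your proposal is correct and follows essentially the same route as the paper: the same composite candidate $V_{c1}=V_1+\tfrac{1}{2}z_1^2$, the same choice $\dot z_1=\varkappa=-x_5-k_2z_1$ to cancel the cross term $x_5z_1$, and the same negative-definite conclusion $\dot V_{c1}=-k_1x_5^2-k_2z_1^2$. Your added caution about inverting the $\cos x_9\cos x_7$ coefficient when extracting $U_f$ is a sensible refinement that the paper's own proof does not address, but it lies outside the lemma itself.
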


\begin{equation}
\label{ts1}
\begin{aligned}
\dot{x}_5 &= z_1 + \phi(x_5) \\
\dot{z}_1 &= \varkappa
\end{aligned}
\end{equation}

\begin{proof}
Using the composite Lyapunov function
\[
V_{c1} = V_1 + \frac{1}{2} z_1^2
\]

\[
\dot{V}_{c1} = \dot{V}_1 + z_1 \dot{z}_1 = x_5z_1 -k_1 x_5^2 + z_1 (\varkappa)
\]

To ensure the asymptotic stability of this system, the following choice is made: 

\[
\varkappa = -x_5 - k_2 z_1
\]

Hence,

\[
\dot{V}_{c1} = -k_1 x_5^2 + z_1 (x_5 - x_5 - k_2 z_1) = -k_1 x_5^2 - k_2 z_1^2 < 0
\]

which shows asymptotic stability (AS) with \( K_1 \) and \( K_2 \) as controller gains, where \( K_1 \geq 0 \) and \( K_2 \geq 0 \).

\end{proof}

Now, to find \( U_f \),the following fact can be used: 

\( \varkappa = \varsigma - \dot{\phi}(x_5) \).

\begin{equation}
\begin{aligned}
\varsigma &= \varkappa + \dot{\phi}(x_5)
\end{aligned}
\end{equation}

For the system described by equation \eqref{mainsys1}, the control input \( U_f \) given in \eqref{control1} will stabilize the system:

\begin{equation*}
\begin{aligned}
\frac{1}{m}\left( U_f \cos x_9 \cos x_7 + m x_{10} x_{12} - m x_8 x_4 \right) & \nonumber \\
+ \frac{1}{m}\left( mg \cos x_9 \cos x_7 + T_z - A_z x_6 \right) & \nonumber \\
= -x_5 - k_2 z_1 - k_1 &
\end{aligned}
\end{equation*}

\begin{equation}
\label{control1}
\begin{aligned}
U_f &= \frac{-m x_{10} x_{12} + m x_8 x_4 - mg \cos(x_9) \cos(x_7) - T_z}{\cos(x_9) \cos(x_7)} \\
&\quad + \frac{A_z x_6 - m x_5 - m k_2 z_1 - m k_1}{\cos(x_9) \cos(x_7)}
\end{aligned}
\end{equation}

A modification of the form,
\[
e_z = x_5 - \bar{x_5}
\]
where \( e_z \) is the error in the z direction and \(\bar{x_5}\) is the desired trajectory, will ensure the controller can be used for tracking purposes.

A modified version of this controller will be used to control the x and y directions of the TUAV. 

\subsection{Attitude controller}

This section focuses on developing a backstepping controller for the attitude of the TUAV. There are three controller inputs, but the process of developing them is similar. Therefore, the focus will be on the steps for the $\tau_\phi$, followed by the presentation of the developed controller inputs for the $\tau_\theta$ and $\tau_\psi$.

The rotational angle \(\phi\) can be described by the state equations below:

\begin{equation}
\label{mainsys2}
\begin{aligned}
\dot{x}_7 &= x_8 \\
\dot{x}_8 &= \frac{U_{\phi} - x_{10} x_{12} I_{yy} + x_{10} x_{12} I_{zz}}{I_{xx}}
\end{aligned}
\end{equation}

First, a virtual controller is designed: 

\begin{definition}
Let \( x_8 \) be the input to this system, and define \( x_8 = \phi(x_7) \) such that \( \phi(0) = 0 \). This choice of \( \phi \) ensures that the origin of the system is an equilibrium point and guarantees asymptotic stability (AS).
\end{definition}

\begin{proposition}
The choice of \( \phi(x_7) = -k_3 x_7 \) is made.
\end{proposition}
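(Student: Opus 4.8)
The plan is to reproduce, for the roll subsystem in \eqref{mainsys2}, the same single-state Lyapunov argument used for the altitude virtual control in the proof labeled \ref{lyapunovC}. The definition preceding this proposition designates $x_8$ as the virtual input to the reduced first-order kinematic equation $\dot{x}_7 = x_8$, and it requires $\phi(0) = 0$ so that the origin remains an equilibrium. Since $\phi(x_7) = -k_3 x_7$ trivially satisfies $\phi(0) = 0$, the only substantive task is to verify that substituting this choice renders the origin of the reduced system asymptotically stable.

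First I would take the quadratic Lyapunov candidate
\[
V_3 = \frac{1}{2} x_7^2,
\]
which satisfies $V_3(0) = 0$ and $V_3(x_7) > 0$ for all $x_7 \neq 0$, exactly as demanded by the definition in \ref{lyapunovC}. Next I would substitute the virtual control $x_8 = \phi(x_7) = -k_3 x_7$ into $\dot{x}_7 = x_8$ and differentiate $V_3$ along the closed-loop trajectory:
\[
\dot{V}_3 = x_7 \dot{x}_7 = x_7 \, x_8 = x_7 \,(-k_3 x_7) = -k_3 x_7^2 .
\]
For any gain $k_3 > 0$ this is negative definite, so $\dot{V}_3 < 0$ for all $x_7 \neq 0$, which establishes asymptotic stability of the origin of the reduced roll subsystem and justifies the stated choice of $\phi$.

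There is no genuine obstacle at this step: the argument is structurally identical to the altitude case, differing only by a relabeling of states ($x_7, x_8$ in place of $x_5, x_6$) and of the gain ($k_3$ in place of $k_1$). The one point worth flagging is conceptual rather than computational — $\phi(x_7)$ stabilizes only the first equation, and the actual torque input $U_\phi$ has not yet entered the analysis. As in the altitude development, the true input will be recovered in the subsequent backstepping step through the error transformation $z_2 = x_8 - \phi(x_7)$ and an augmented composite candidate of the form $V_3 + \tfrac{1}{2} z_2^2$; the present proposition merely secures the base case of that recursion.
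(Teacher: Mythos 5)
Your proposal is correct and follows essentially the same argument as the paper: the same quadratic Lyapunov candidate (the paper calls it $V_2$), the same substitution of $x_8 = -k_3 x_7$ into $\dot{x}_7 = x_8$, and the same conclusion $\dot{V} = -k_3 x_7^2 < 0$. Your explicit note that $k_3 > 0$ is required for negative definiteness is a small but worthwhile precision that the paper leaves implicit.
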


\begin{proof}
Consider a positive definite and radially unbounded Lyapunov candidate as in \ref{lyapunovC}, \( V_2 = \frac{1}{2} x_7^2 \).

\[
\dot{V}_2 = x_7 \dot{x}_7 = x_7 \cdot (-k_3 x_7) = -k_3 x_7^2 < 0 
\]

This shows asymptotic stability. Hence, the chosen controller \( -k_3 x_7 \) will stabilize the system.
\end{proof}

Now, consider a transformation, \( z_2 = x_8 - \phi(x_7) \).

\[
x_8 = z_2 + \phi(x_7)
\]

The time derivative of this transformation is given by:

\[
\dot{z}_2 = \dot{x}_8 - \dot{\phi}(x_7)
\]

Let \(\varsigma_2 = \dot{x}_8 \) and \(\varsigma_2 - \dot{\phi}(x_7) = \varkappa_2\)

Then,

\[
\dot{z}_2 = \varsigma_2 - \dot{\phi}(x_7) = \varkappa_2
\]

Similar to lemma \ref{lemma1}, the asymptotic stability of this new system with the transformation, where the two states are:

\begin{equation}
\label{ts2}
\begin{aligned}
\dot{x}_7 &= z_2 + \phi(x_7) \\
\dot{z}_2 &= \varkappa_2
\end{aligned}
\end{equation}

can be determined using a composite Lyapunov function. 

\begin{proof}
Consider a positive definite and radially unbounded candidate defined by \ref{lyapunovC}
\[
V_{c2} = V_2 + \frac{1}{2} z_2^2
\]

\[
\dot{V}_{c2} = \dot{V}_2 + z_2 \dot{z}_2 = x_7z_2 -k_3 x_7^2 + z_2 (\varkappa_2)
\]

To ensure the asymptotic stability of this system, the following choice is made:

\[
\varkappa_2 = -x_7 - k_3 z_2
\]

Hence,

\[
\dot{V}_{c2} = -k_3 x_7^2 + z_2 (x_7 - x_7 - k_4 z_2) = -k_3 x_7^2 - k_4 z_2^2 < 0
\]

which shows asymptotic stability (AS) with \( K_3 \) and \( K_4 \) as controller gains, where \( K_3 \geq 0 \) and \( K_4 \geq 0 \).
\end{proof} 

Now, to find $U_{\phi}$, the following fact is used:

\( \varkappa_2 = \varsigma_2 - \dot{\phi}(x_7) \).

\begin{equation}
\begin{aligned}
\varsigma_2 &=\varkappa_2 + \dot{\phi}(x_7)
\end{aligned}
\end{equation}

For the system described by equation \eqref{mainsys2}, the control input \( U_\phi \) given in \eqref{control2} will stabilize the system:

\[
\frac{U_{\phi} - x_{10} x_{12} I_{yy} + x_{10} x_{12} I_{zz}}{I_{xx}} = -k_3 z_2 - x_7 - k_3
\]

\begin{equation}
\label{control2}
\begin{aligned}
U_{\phi} = I_{xx}(-k_3 z_2 - x_7 - k_3) + x_{10} x_{12} I_{yy} - x_{10} x_{12} I_{zz}
\end{aligned}
\end{equation}

A simple modification of the form,
\[
e_\phi = x_7 - \bar{x_\phi}
\]
where \( e_\phi \) is the error in the \(\phi\) orientation and \(\bar{x_\phi}\) is the desired orientation, will ensure the controller can be used for tracking purposes.

Following the same steps as above, the control inputs $U_\theta$ and $U_\psi$ are given below:

\begin{equation}
\label{control3}
\begin{aligned}
U_{\theta} = I_{yy}(-k_6 z_3 - x_9 - k_5) + x_{8} x_{12} I_{xx} + x_{8} x_{12} I_{zz}
\end{aligned}
\end{equation}

\begin{equation}
\label{control4}
\begin{aligned}
U_{\psi} = I_{zz}(-k_8 z_4 - x_{11} - k_7) - x_{8} x_{10} I_{xx} + x_{8} x_{10} I_{yy}
\end{aligned}
\end{equation}

\subsection{Ground controller design}

In this section, the backstepping controller is focused on the ground control input for the winder. The winder should be able to provide enough torque to maintain the required tension that will keep the tether in the catenary shape while following the desired trajectories of the TUAV. Similar to the backstepping steps followed earlier, the system of the winder will be considered in state space form as below:

\begin{equation}\label{mainsys3}
\begin{aligned}
\dot{x}_{13} &= x_{14} \\
\dot{x}_{14} &= \frac{1}{I_w} (- \beta_w x_{14} + r_w U_{\text{win}})
\end{aligned}
\end{equation}

A virtual controller is firstly designed: 

\begin{definition}
Let \( x_{14} \) be the input to this system, and define \( x_{14} = \phi(x_{13}) \) such that \( \phi(0) = 0 \). This choice of \( \phi \) ensures that the origin of the system is an equilibrium point and guarantees asymptotic stability (AS) and even global asymptotic stability.
\end{definition}

\begin{proposition}
The choice of \( \phi(x_{13}) = -k_{w} x_{13} \) is made.
\end{proposition}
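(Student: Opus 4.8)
The plan is to replicate exactly the single-state Lyapunov argument already used for the altitude and attitude subsystems, since the first equation of the winder system, $\dot{x}_{13} = x_{14}$, has the identical structure with $x_{14}$ playing the role of the virtual input. First I would select the positive definite, radially unbounded Lyapunov candidate following \ref{lyapunovC},
\[
V_{w} = \frac{1}{2} x_{13}^2,
\]
which satisfies $V_w(0)=0$ and $V_w(x_{13}) > 0$ for all $x_{13} \neq 0$.

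Next I would insert the proposed virtual law $x_{14} = \phi(x_{13}) = -k_{w} x_{13}$ into $\dot{x}_{13} = x_{14}$ and differentiate $V_w$ along the trajectories of the closed subsystem:
\[
\dot{V}_w = x_{13}\dot{x}_{13} = x_{13} x_{14} = x_{13}(-k_{w} x_{13}) = -k_{w} x_{13}^2.
\]
For any gain $k_{w} > 0$ this is negative definite for every $x_{13}\neq 0$, so by the standard Lyapunov theorem the origin of the $x_{13}$ subsystem is asymptotically stable under the virtual control, which is precisely the claim of the Proposition.

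The only point requiring slightly more than the two earlier propositions is the \emph{global} asymptotic stability asserted in the preceding Definition. Because $V_w = \tfrac{1}{2}x_{13}^2$ is radially unbounded ($V_w \to \infty$ as $|x_{13}| \to \infty$) and $\dot{V}_w < 0$ for all $x_{13}\neq 0$ on the whole of $\mathbb{R}$, I would invoke the global Lyapunov theorem (Barbashin--Krasovskii) to upgrade the conclusion from local to global asymptotic stability, matching the Definition. I do not anticipate any genuine obstacle: once the virtual law is substituted the subsystem is scalar and linear, and the quadratic candidate is the natural choice. The one caveat I would state explicitly rather than prove is that this step only stabilizes $x_{13}$ while treating $x_{14}$ as a free input; the true winder torque $U_{\text{win}}$ remains unconstrained and will be fixed at the subsequent backstepping stage by introducing the error coordinate $z = x_{14} - \phi(x_{13})$ together with a composite Lyapunov function $V_{cw} = V_w + \tfrac{1}{2}z^2$, exactly as in Lemma \ref{lemma1}.
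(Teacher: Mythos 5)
Your proposal is correct and follows essentially the same route as the paper: the quadratic candidate \( V = \tfrac{1}{2}x_{13}^2 \), substitution of \( x_{14} = -k_w x_{13} \) into \( \dot{x}_{13} = x_{14} \), and the conclusion \( \dot{V} = -k_w x_{13}^2 < 0 \). You are in fact slightly more careful than the paper, which asserts \emph{global} asymptotic stability without explicitly citing radial unboundedness of the candidate as you do.
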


\begin{proof}
Choosing a Lyapunov candidate defined as in \ref{lyapunovC}, \( V_{12} = \frac{1}{2} x_{13}^2 \).

\[
\dot{V}_{12} = x_{13} \dot{x}_{13} = x_{13} \cdot (-k_{w} x_{13}) = -k_{w} x_{13}^2 < 0 
\]

This shows global asymptotic stability. Hence, the chosen controller \( -k_{w} x_{13} \) will stabilize the system regardless of the initial condition.
\end{proof}

Now, consider a transformation, \( z_{7} = x_{14} - \phi(x_{13}) \).

\[
x_{14} = z_{7} + \phi(x_{13})
\]

The time derivative of this transformation is given by:

\[
\dot{z}_{7} = \dot{x}_{14} - \dot{\phi}(x_{13})
\]

Let \(\varsigma_7 = \dot{x}_8 \) and \(\varsigma_7 - \dot{\phi}(x_7) = \varkappa_7\)

Then,

\[
\dot{z}_7 = \varsigma_7 - \dot{\phi}(x_7) = \varkappa_7
\]

Also similar to Lemma \ref{lemma1}, the asymptotic stability of this new system can be proven with the transformation, where the two states are:
\[
\dot{x}_{13} = z_{7} + \phi(x_{13})
\]
\[
\dot{z}_{7} = \varkappa_7
\]

\begin{proof}
Similarly, the composite Lyapunov function
\[
V_{c12} = V_{12} + \frac{1}{2} z_{7}^2
\]

\[
\dot{V}_{c12} = \dot{V}_{12} + z_{7} \dot{z}_{7} =  x_{13}z_7 -k_{w} x_{13}^2 + z_{7} (x_{13} + \varkappa_7)
\]

To ensure the asymptotic stability of this system, the following choice is made:

\[
\varkappa_7 = -x_{13} - k_{w2} z_{7}
\]

Hence,

\begin{equation*}
\begin{aligned}
\dot{V}_{c12} &= -k_{w} x_{13}^2 + z_{7} (x_{13} - x_{13} - k_{w2} z_{7}) \\
              &= -k_{w} x_{13}^2 - k_{w2} z_{7}^2 < 0
\end{aligned}
\end{equation*}

which shows asymptotic stability (AS) with \( K_w \) and \( K_{w2} \) as controller gains, where \( K_w \geq 0 \) and \( K_{w2} \geq 0 \).
\end{proof}

Now, to find \( U_{\text{win}} \), 

\( u_{\text{win}} = \varsigma_7 - \dot{\phi}(x_{13}) \).

\begin{equation}
\begin{aligned}
\varsigma_7 &= \varkappa_7  + \dot{\phi}(x_{13})
\end{aligned}
\end{equation}

For the system described by equation \eqref{mainsys3}, the control input \( U_win \) given in \eqref{control5} will stabilize the system:

\[
\frac{1}{I_w} (-\beta_w + r_w U_{\text{win}}) = -x_{13} - k_{w2} z_{7} - k_{w} x_{13}
\]

\begin{equation}
\label{control5}
\begin{aligned}
U_{\text{win}} = \frac{I_w (-x_{13} - k_{w2} z_{7} - k_{w} x_{13}) + \beta_2 x_{14}}{r_w}
\end{aligned}
\end{equation}

A simple modification of the form,
\[
e_{L} = x_{13} - \bar{L}
\]
where \( e_{L} \) is the error in the tether length and \( \bar{L} \) is the desired length based on the TUAV current position, will ensure the control input can be used to track desired trajectories of the tether.

\section{Results and Simulation}
Consider the TUAV system described in Section \ref{sysdes}, with the following parameters: mass \( m = 2.84 \) kg, gravitational acceleration \( g = 9.81 \) m/s\(^2\), and moments of inertia \( I_{xx} = 0.5192 \) kg\(\cdot\)m\(^2\), \( I_{yy} = 0.4929 \) kg\(\cdot\)m\(^2\), and \( I_{zz} = 0.0947 \) kg\(\cdot\)m\(^2\). The density of the tether is given as \( \rho = 0.034 \) kg/m, with a cross-sectional area of \( A = 1.1 \times 10^{-4} \) m\(^2\). In this section, the simulation results of the developed controllers using these parameters are presented.

\subsection{Setpoint and Errors}
First, the state evolution through setpoint tracking is demonstrated, illustrating that all states asymptotically converge to zero, as shown in Fig. \ref{states_evolve}.

\begin{figure}[ht]
\centering
\includegraphics[width=0.5\textwidth]{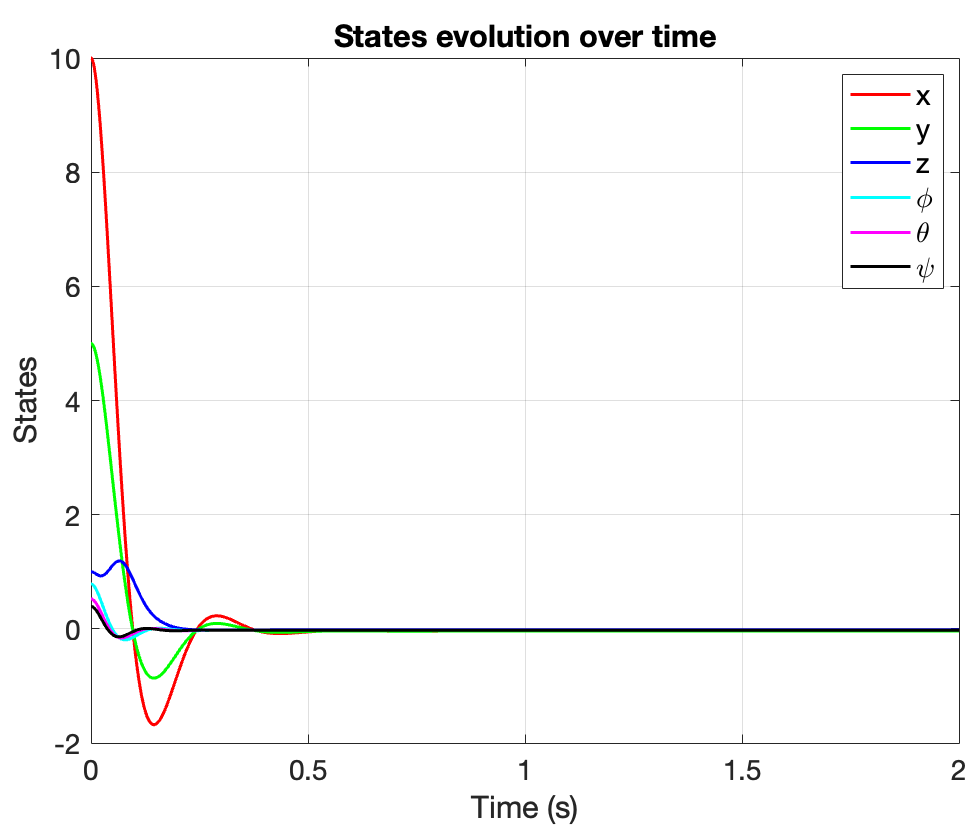}
\caption{Stabilization of system states over time}
\label{states_evolve}
\end{figure}

As the states evolve over time, the error in the states is expected to follow similar trajectories and converge to zero in finite time, as depicted in Fig. \ref{errors}.

\begin{figure}[ht]
\centering
\includegraphics[width=0.5\textwidth]{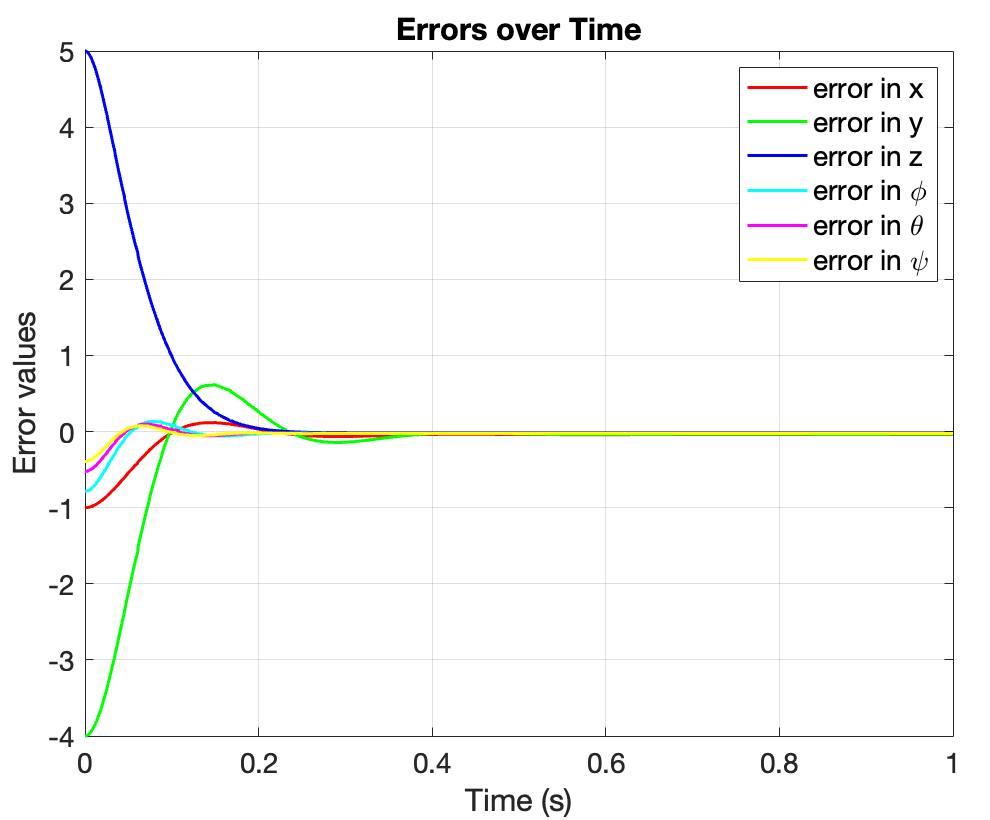}
\caption{Errors over time}
\label{errors}
\end{figure}

\subsection{Linear Trajectory Tracking}

Following the demonstration that all states globally asymptotically stabilize to the setpoint in finite time, the performance of the control under different operational conditions was evaluated. First, trajectory tracking of the six states was tested. In Fig. \ref{pos}, it can be seen that the designed controller was able to track the desired trajectories $(\bar{x}, \bar{y}, \bar{z}) \in \mathbb{R}^3$ of the system, and Fig. \ref{angles} shows similar tracking of the UAV's attitude.

\begin{figure}[ht]
\centering
\includegraphics[width=0.5\textwidth]{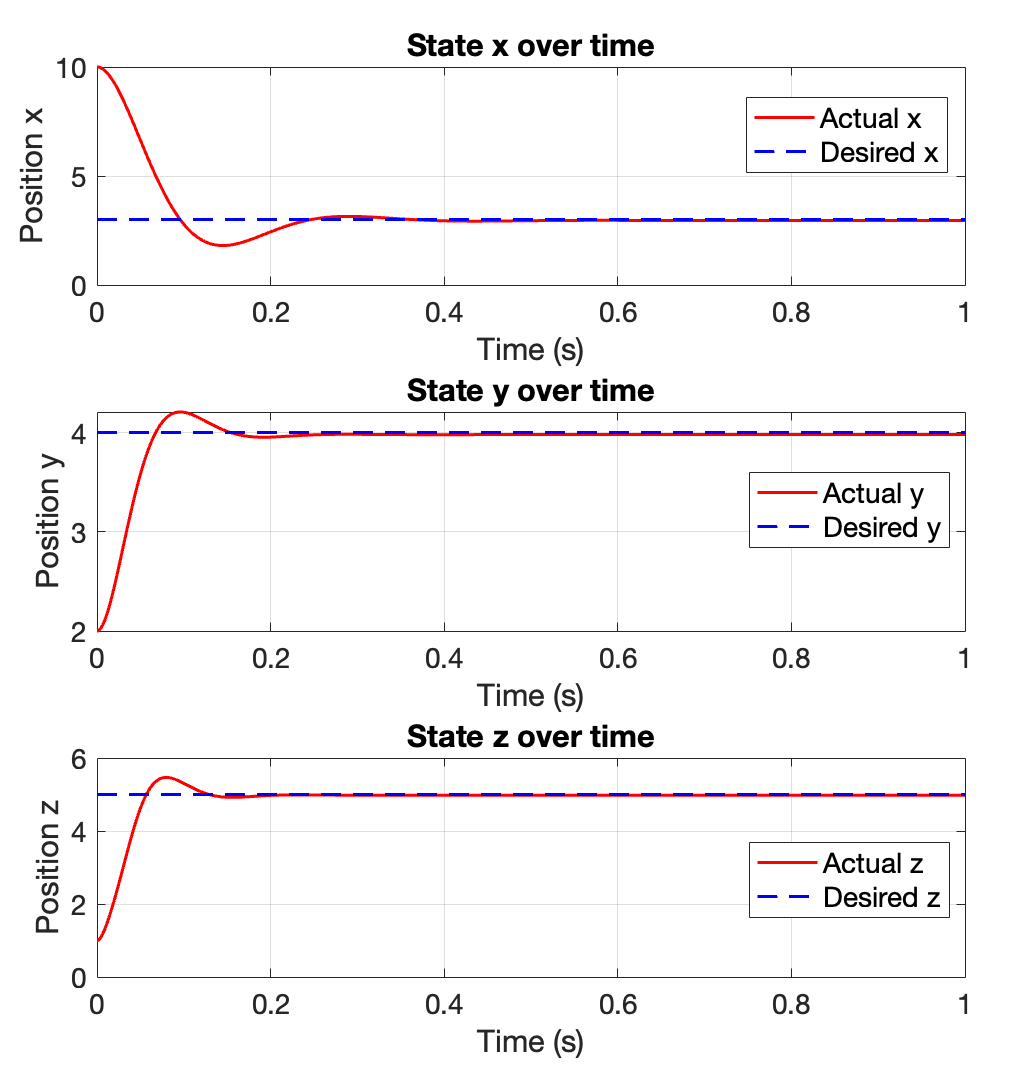}
\caption{Position tracking of TUAV}
\label{pos}
\end{figure}

\begin{figure}[ht]
\centering
\includegraphics[width=0.5\textwidth]{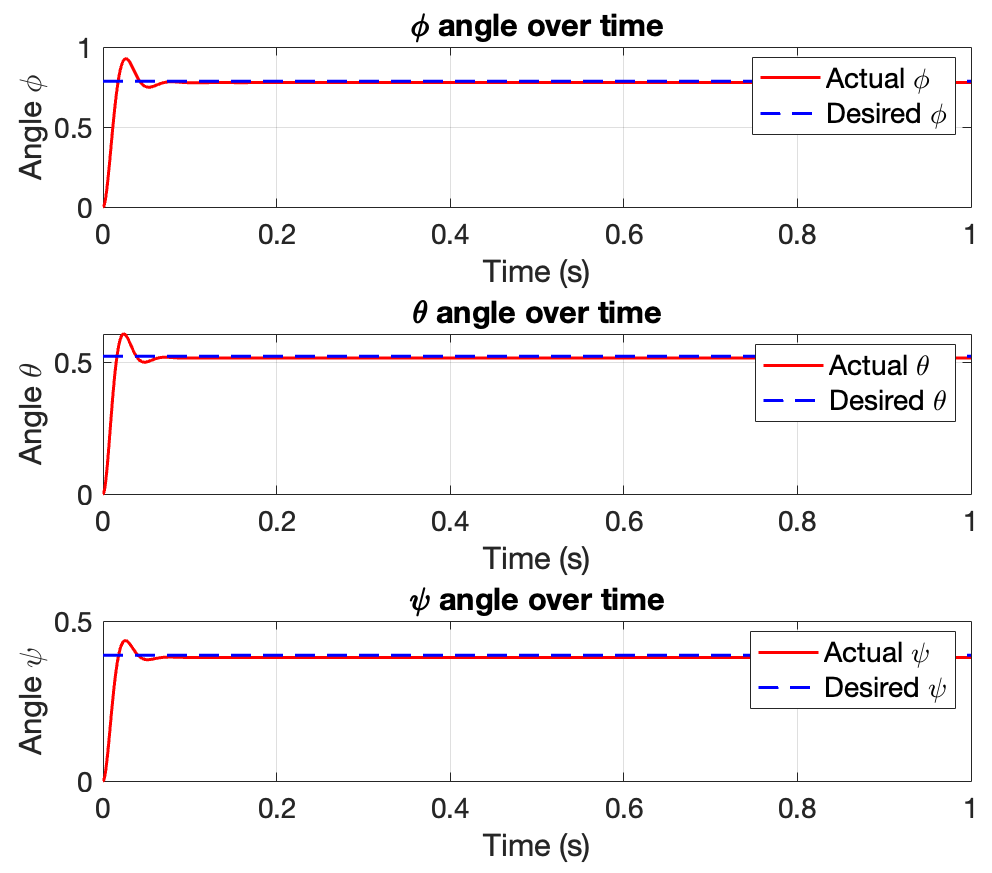}
\caption{Attitude tracking of TUAV}
\label{angles}
\end{figure}

\subsection{Circular Trajectory Tracking}
Following the tracking of linear trajectories for the system's position and attitude, the controller's performance was tested on more complex trajectories, such as a circular one. This trajectory is significant for TUAV operations, as the stationary nature of the system necessitates the ability to ascend, maintain a position, and maneuver at a specific altitude. This capability is particularly important in applications involving agriculture and surveillance. The results are shown in Fig. \ref{circular}.

\begin{figure}[ht]
\centering
\includegraphics[width=0.5\textwidth]{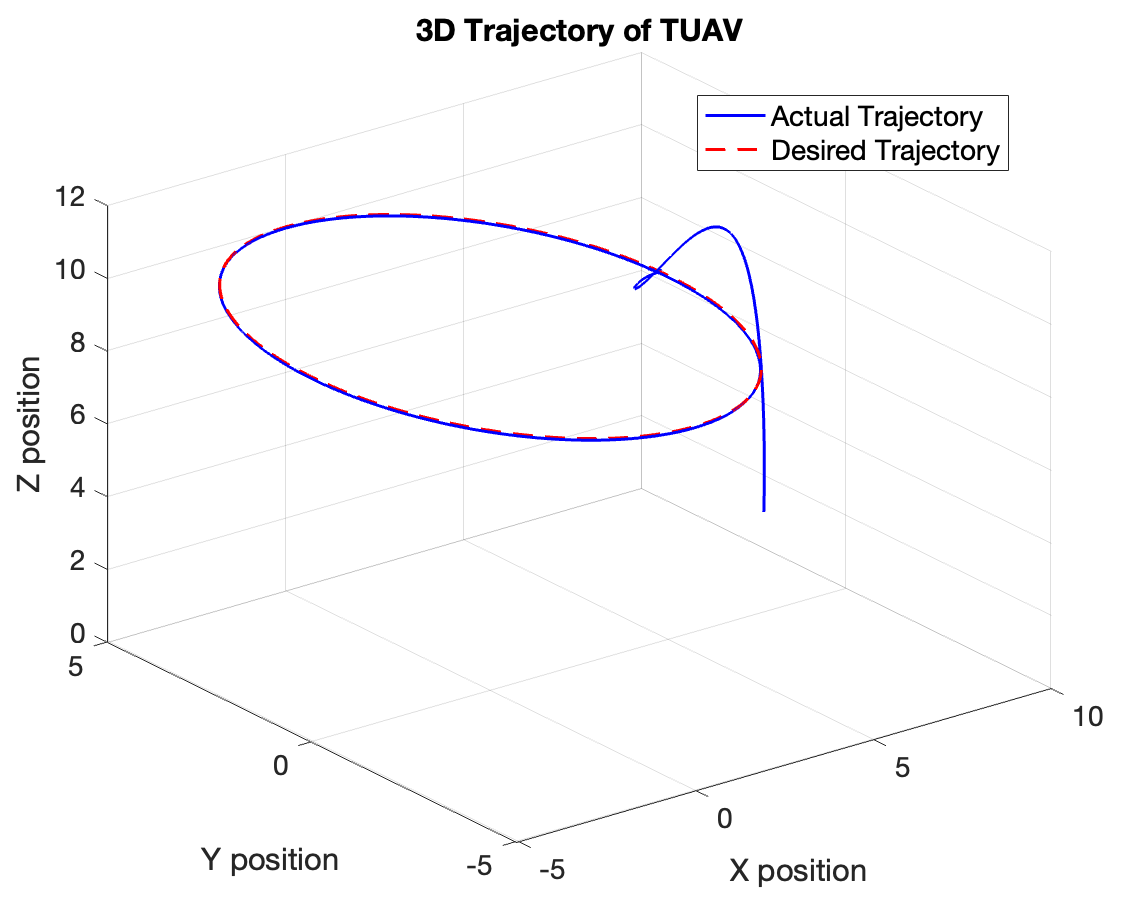}
\caption{UAV tracking a circular trajectory}
\label{circular}
\end{figure}

\subsection{Tether Length Tracking}

Having demonstrated that the TUAV's position and attitude follow the desired trajectories, it was necessary to ensure that the tether also follows the system while maintaining its catenary nature. In Fig. \ref{tether}, the evolution of the tether is shown by comparing the desired and actual states as the TUAV moves from its initial position to its desired position.

\begin{figure}[ht]
\centering
\includegraphics[width=0.5\textwidth]{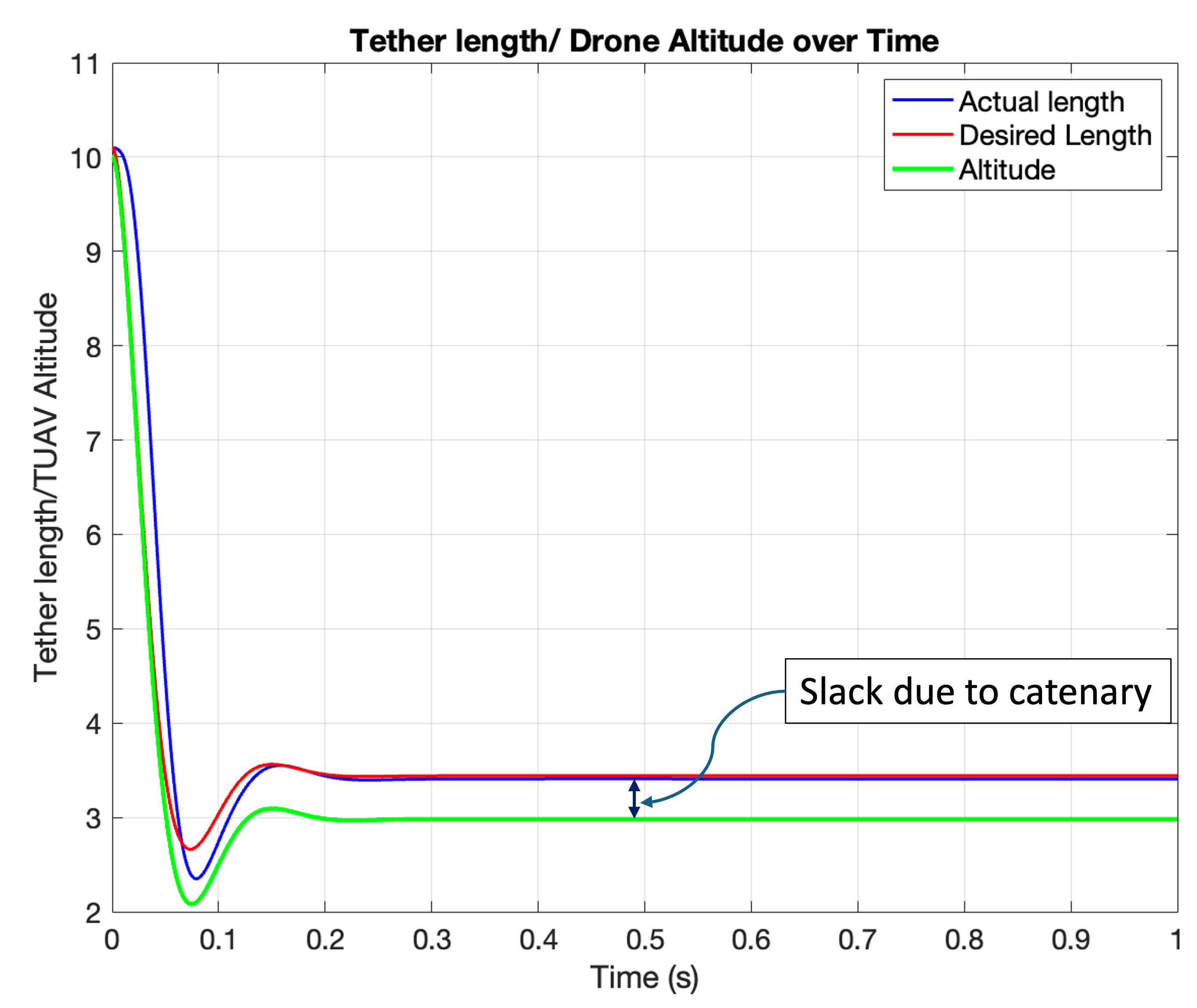}
\caption{Tether length tracking}
\label{tether}
\end{figure}

The altitude of the system is also included to show that the tether length adjusts in accordance with this state of the system. It should be noted that, due to the catenary nature of the tether, the desired and actual length of the tether are always greater than the altitude of the TUAV due to the slack nature of the catenary. The catenary length of the tether is in accordance with equation \ref{eqn2}.

\subsection{Animations}
To further demonstrate the effectiveness of the designed controller and the catenary nature of the tether, animations of the system were created. Readers can check the project github page at \href{https://github.com/sof-danny/TUAV\_system\_control}{https://github.com/sof-danny/TUAV\_system\_control}
 for these animations. In the animations, readers can observe the ripple effect of wind on the tether as the TUAV moves from its initial position to the desired location. Both single position tracking and multiple position tracking were tested. Since only snapshots can be shown in this paper, the trail of the TUAV's movement has been added to illustrate how the system transitions from point to point.

\subsubsection{Single Position}
Here, the goal is to move from the initial condition to a single desired position while maintaining the catenary shape of the tether. As seen in a snapshot in Fig. \ref{single_animation}, the TUAV was able to reach the desired location and stay there. This is useful for applications involving surveillance where one wants the TUAV to cover a particular section of the region of interest.

\begin{figure}[ht]
\centering
\includegraphics[width=0.5\textwidth]{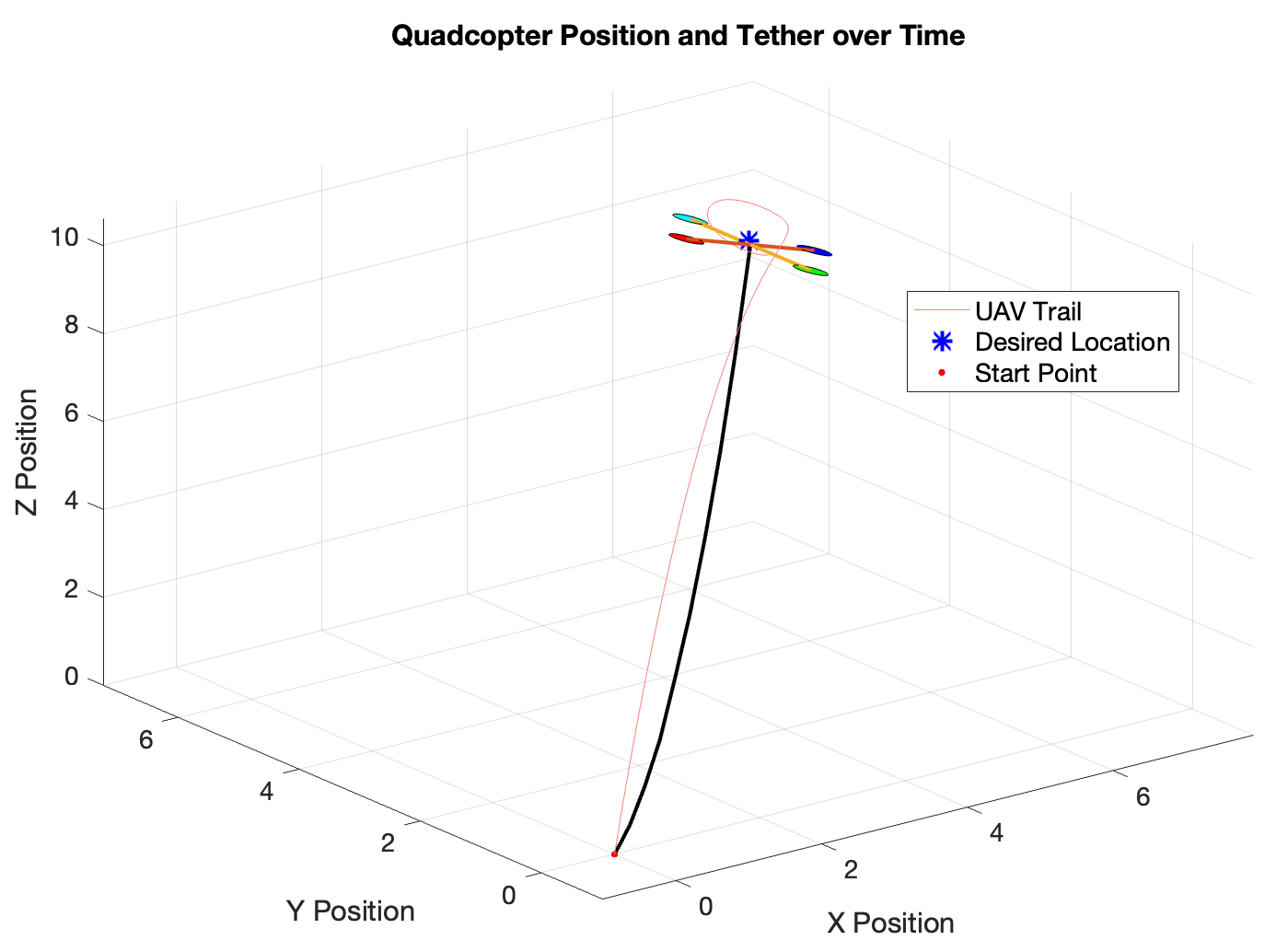}
\caption{TUAV reaching a desired location}
\label{single_animation}
\end{figure}

\subsubsection{Multiple Positions}
In many other instances, such as agriculture, dynamic surveillance, photography, and construction, one may want the TUAV to move from one position to another without returning to the starting position. In Fig. \ref{multiple_animation}, the ability of the system to track multiple positions is shown. The initial conditions are reinitialized after the TUAV reaches the first position, making the first position the next initial position.

\begin{figure}[ht]
\centering
\includegraphics[width=0.5\textwidth]{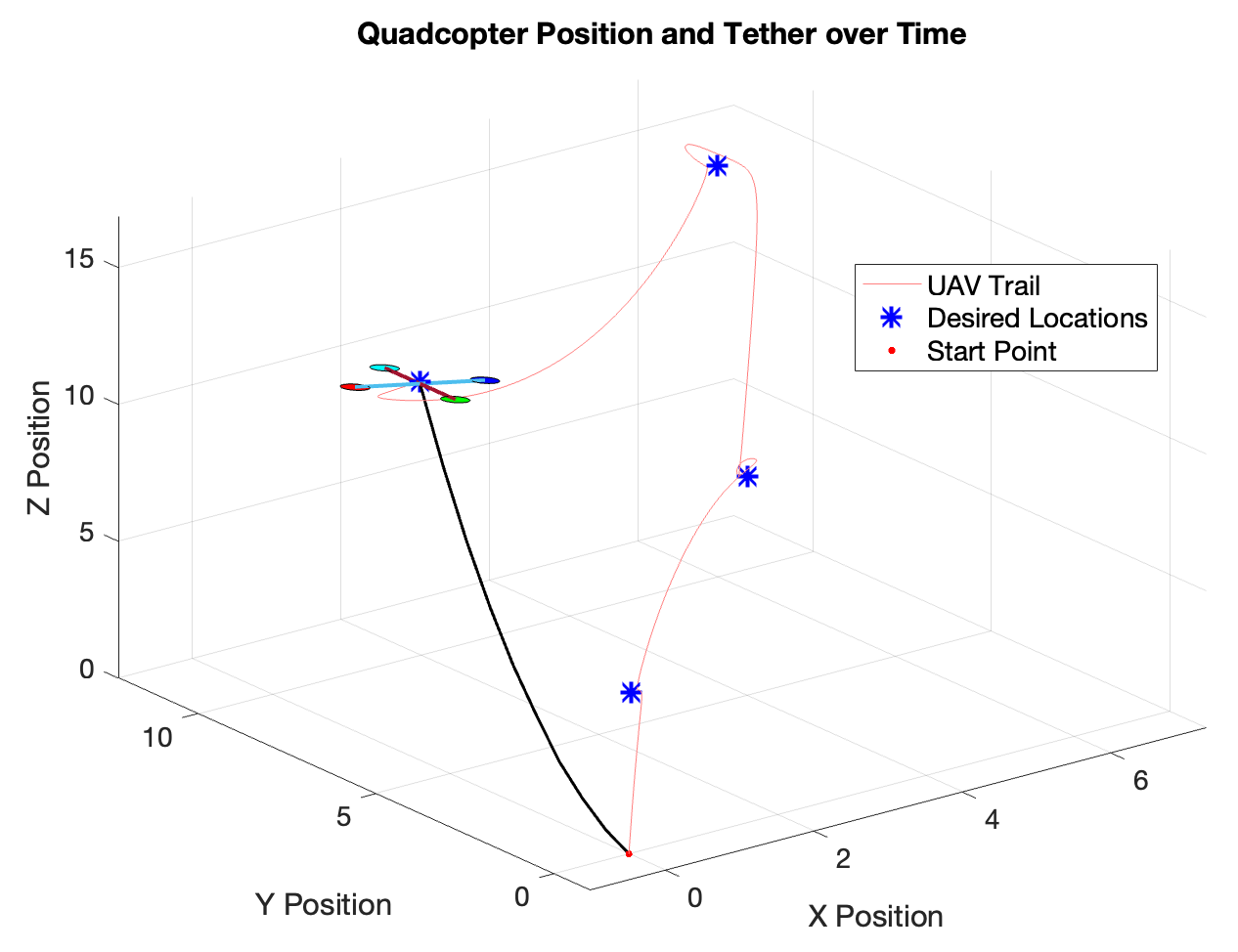}
\caption{TUAV reaching multiple locations}
\label{multiple_animation}
\end{figure}

\section{Conclusion}
In this paper, a comprehensive control architecture for a Tethered Unmanned Aerial Vehicle (TUAV) system, integrating both onboard and ground-based controllers was developed. The onboard controllers were tasked with managing the position and attitude of the UAV, while the ground control was responsible for regulating the winder to maintain the catenary shape of the tether. 
Utilizing nonlinear backstepping control techniques, global asymptotic stability of the UAV's states was ensured.

The effectiveness of the control strategy was validated through extensive simulations. The results demonstrated the TUAV system's ability to accurately track both linear and circular trajectories, highlighting the robustness of the control design in various operational scenarios. Additionally, it was ensured that the tether length adapted appropriately to maintain the catenary shape, which is critical for the system's stability and functionality.

Further, animations were shown to illustrate the dynamic performance of the TUAV system, providing visual evidence of the system's capability to handle both single and multiple position tracking tasks. These animations reinforced the practical applicability of the control approach in real-world scenarios, such as surveillance, agriculture, construction, and other field operations.

The next phase of this project could involve integrating the dynamics of mobile platforms, such as a cart or vehicle, to enhance the TUAV system's mobility and ensure it is not constrained to a fixed location.


\bibliographystyle{IEEEtran}

\bibliography{IEEEfull}

\begin{thebibliography}{10}
\providecommand{\url}[1]{#1}
\csname url@samestyle\endcsname
\providecommand{\newblock}{\relax}
\providecommand{\bibinfo}[2]{#2}
\providecommand{\BIBentrySTDinterwordspacing}{\spaceskip=0pt\relax}
\providecommand{\BIBentryALTinterwordstretchfactor}{4}
\providecommand{\BIBentryALTinterwordspacing}{\spaceskip=\fontdimen2\font plus
\BIBentryALTinterwordstretchfactor\fontdimen3\font minus \fontdimen4\font\relax}
\providecommand{\BIBforeignlanguage}[2]{{%
\expandafter\ifx\csname l@#1\endcsname\relax
\typeout{** WARNING: IEEEtran.bst: No hyphenation pattern has been}%
\typeout{** loaded for the language `#1'. Using the pattern for}%
\typeout{** the default language instead.}%
\else
\language=\csname l@#1\endcsname
\fi
#2}}
\providecommand{\BIBdecl}{\relax}
\BIBdecl

\bibitem{Patel_2013}
\BIBentryALTinterwordspacing
P.~N. Patel, M.~Patel, R.~M. Faldu, and Y.~R. Dave, ``Quadcopter for agricultural surveillance,'' in \emph{Advances in Electronic and Electric Engineering}, vol.~3, no.~4, 2013, pp. 427--432. [Online]. Available: \url{https://api.semanticscholar.org/CorpusID:42989740}
\BIBentrySTDinterwordspacing

\bibitem{Hasan_2020}
K.~M. Hasan, W.~S. Suhaili, S.~H. Shah~Newaz, and M.~S. Ahsan, ``Development of an aircraft type portable autonomous drone for agricultural applications,'' in \emph{2020 International Conference on Computer Science and Its Application in Agriculture (ICOSICA)}, 2020, pp. 1--5.

\bibitem{Alvear_2017}
\BIBentryALTinterwordspacing
O.~Alvear, N.~R. Zema, E.~Natalizio, and C.~T. Calafate, ``Using uav-based systems to monitor air pollution in areas with poor accessibility,'' \emph{Journal of Advanced Transportation}, vol. 2017, no.~1, p. 8204353, 2017. [Online]. Available: \url{https://onlinelibrary.wiley.com/doi/abs/10.1155/2017/8204353}
\BIBentrySTDinterwordspacing

\bibitem{Siebert_2014}
\BIBentryALTinterwordspacing
S.~Siebert and J.~Teizer, ``Mobile 3d mapping for surveying earthwork projects using an unmanned aerial vehicle (uav) system,'' \emph{Automation in Construction}, vol.~41, pp. 1--14, 2014. [Online]. Available: \url{https://www.sciencedirect.com/science/article/pii/S0926580514000193}
\BIBentrySTDinterwordspacing

\bibitem{Sampedro_2019}
\BIBentryALTinterwordspacing
C.~Sampedro, A.~Rodriguez-Ramos, H.~Bavle, A.~Carrio, P.~d.~l. Puente, and P.~Campoy, ``A fully-autonomous aerial robot for search and rescue applications in indoor environments using learning-based techniques,'' \emph{Journal of Intelligent \& Robotic Systems}, vol.~95, no.~2, pp. 601--627, 2019. [Online]. Available: \url{https://doi.org/10.1007/s10846-018-0898-1}
\BIBentrySTDinterwordspacing

\bibitem{folorunsho2024redefiningaerialinnovationautonomous}
\BIBentryALTinterwordspacing
S.~O. Folorunsho and W.~R. Norris, ``Redefining aerial innovation: Autonomous tethered drones as a solution to battery life and data latency challenges,'' 2024. [Online]. Available: \url{https://arxiv.org/abs/2403.07922}
\BIBentrySTDinterwordspacing

\bibitem{Papachristos_2014}
C.~Papachristos and A.~Tzes, ``The power-tethered uav-ugv team: A collaborative strategy for navigation in partially-mapped environments,'' in \emph{22nd Mediterranean Conference on Control and Automation}, 2014, pp. 1153--1158.

\bibitem{Miki_2019}
T.~Miki, P.~Khrapchenkov, and K.~Hori, ``Uav/ugv autonomous cooperation: Uav assists ugv to climb a cliff by attaching a tether,'' in \emph{2019 International Conference on Robotics and Automation (ICRA)}, 2019, pp. 8041--8047.

\bibitem{Xiao_2018}
X.~Xiao, J.~Dufek, M.~Suhail, and R.~Murphy, ``Motion planning for a uav with a straight or kinked tether,'' in \emph{2018 IEEE/RSJ International Conference on Intelligent Robots and Systems (IROS)}, 2018, pp. 8486--8492.

\bibitem{Dufek_2017}
J.~Dufek, X.~Xiao, and R.~Murphy, ``Visual pose stabilization of tethered small unmanned aerial system to assist drowning victim recovery,'' in \emph{2017 IEEE International Symposium on Safety, Security and Rescue Robotics (SSRR)}, 2017, pp. 116--122.

\bibitem{Xiao_2017}
X.~Xiao, J.~Dufek, and R.~Murphy, ``Visual servoing for teleoperation using a tethered uav,'' in \emph{2017 IEEE International Symposium on Safety, Security and Rescue Robotics (SSRR)}, 2017, pp. 147--152.

\bibitem{Rico_2021}
D.~A. Rico, F.~Muñoz-Arriola, and C.~Detweiler, ``Trajectory selection for power-over-tether atmospheric sensing uas,'' in \emph{2021 IEEE/RSJ International Conference on Intelligent Robots and Systems (IROS)}, 2021, pp. 2321--2328.

\bibitem{Qi_2019}
\BIBentryALTinterwordspacing
X.~Qi, A.~Ding, W.~Nie, X.~Chi, X.~Huang, Z.~Xu, T.~Wang, Z.~Wang, J.~Wang, P.~Sun, Q.~Zhang, J.~Huo, D.~Wang, Q.~Bian, L.~Zhou, Q.~Zhang, Z.~Ning, D.~Fei, G.~Xiu, and Q.~Fu, ``Direct measurement of new particle formation based on tethered airship around the top of the planetary boundary layer in eastern china,'' \emph{Atmospheric Environment}, vol. 209, pp. 92--101, 2019. [Online]. Available: \url{https://www.sciencedirect.com/science/article/pii/S1352231019302390}
\BIBentrySTDinterwordspacing

\bibitem{Carrozzo_2019}
M.~Carrozzo, S.~De~Vito, E.~Esposito, F.~Formisano, M.~Salvato, E.~Massera, G.~Di~Francia, P.~Delli~Veneri, M.~Iadaresta, and A.~Mennella, ``An uav mounted intelligent monitoring system for impromptu air quality assessments,'' in \emph{Sensors}, B.~And{\`o}, F.~Baldini, C.~Di~Natale, V.~Ferrari, V.~Marletta, G.~Marrazza, V.~Militello, G.~Miolo, M.~Rossi, L.~Scalise, and P.~Siciliano, Eds.\hskip 1em plus 0.5em minus 0.4em\relax Cham: Springer International Publishing, 2019, pp. 497--506.

\bibitem{Rico_2020}
D.~A. Rico, F.~Muñoz-Arriola, and C.~Detweiler, ``Power-over-tether uas leveraged for nearly-indefinite meteorological data acquisition,'' in \emph{Proceedings of the 2020 ASABE Annual International Virtual Meeting}.\hskip 1em plus 0.5em minus 0.4em\relax St. Joseph, Michigan: American Society of Agricultural and Biological Engineers, 2020.

\bibitem{Saif_2021}
A.~Saif, K.~Dimyati, K.~A. Noordin, N.~S.~M. Shah, S.~H. Alsamhi, and Q.~Abdullah, ``Energy-efficient tethered uav deployment in b5g for smart environments and disaster recovery,'' in \emph{2021 1st International Conference on Emerging Smart Technologies and Applications (eSmarTA)}, 2021, pp. 1--5.

\bibitem{Safwat_2022}
\BIBentryALTinterwordspacing
N.~E.-D. Safwat, I.~M. Hafez, and F.~Newagy, ``3d placement of a new tethered uav to uav relay system for coverage maximization,'' \emph{Electronics}, vol.~11, no.~3, 2022. [Online]. Available: \url{https://www.mdpi.com/2079-9292/11/3/385}
\BIBentrySTDinterwordspacing

\bibitem{Vishnevsky_2019}
V.~Vishnevsky and R.~Meshcheryakov, ``Experience of developing a multifunctional tethered high-altitude unmanned platform of long-term operation,'' in \emph{Interactive Collaborative Robotics}, A.~Ronzhin, G.~Rigoll, and R.~Meshcheryakov, Eds.\hskip 1em plus 0.5em minus 0.4em\relax Cham: Springer International Publishing, 2019, pp. 236--244.

\bibitem{Kourani_2023}
\BIBentryALTinterwordspacing
A.~Kourani and N.~Daher, ``Three-dimensional modeling of a tethered uav--buoy system with relative-positioning and directional surge velocity control,'' \emph{Nonlinear Dynamics}, vol. 111, no.~2, pp. 1245--1268, 2023. [Online]. Available: \url{https://doi.org/10.1007/s11071-022-07918-1}
\BIBentrySTDinterwordspacing

\bibitem{Muttin_2011}
\BIBentryALTinterwordspacing
F.~Muttin, ``Umbilical deployment modeling for tethered uav detecting oil pollution from ship,'' \emph{Applied Ocean Research}, vol.~33, no.~4, pp. 332--343, 2011. [Online]. Available: \url{https://www.sciencedirect.com/science/article/pii/S0141118711000526}
\BIBentrySTDinterwordspacing

\bibitem{Nicotra_2017}
\BIBentryALTinterwordspacing
M.~M. Nicotra, R.~Naldi, and E.~Garone, ``Nonlinear control of a tethered uav: The taut cable case,'' \emph{Automatica}, vol.~78, pp. 174--184, 2017. [Online]. Available: \url{https://www.sciencedirect.com/science/article/pii/S000510981630526X}
\BIBentrySTDinterwordspacing

\bibitem{Glick_2018}
T.~Glick and S.~Arogeti, ``Control of tethered drones with state and input constraints - a unified model approach,'' in \emph{2018 International Conference on Unmanned Aircraft Systems (ICUAS)}, 2018, pp. 995--1002.

\bibitem{Chien_2021}
J.~L. Chien, L.~T.~L. Clarissa, J.~Liu, J.~Low, and S.~Foong, ``Kinematic model predictive control for a novel tethered aerial cable-driven continuum robot,'' in \emph{2021 IEEE/ASME International Conference on Advanced Intelligent Mechatronics (AIM)}, 2021, pp. 1348--1354.

\bibitem{Glick_2016}
T.~Glick and S.~Arogeti, ``Tethered drones for precision agriculture,'' in \emph{CIGR-AgEng Conference}.\hskip 1em plus 0.5em minus 0.4em\relax Aarhus, Denmark: CIGR-AgEng, Jun. 2016, jun. 26--29, 2016.

\bibitem{Dicembrini_2020}
E.~Dicembrini, M.~Scanavino, F.~Dabbene, and G.~Guglieri, ``Modelling and simulation of a tethered uas,'' in \emph{2020 International Conference on Unmanned Aircraft Systems (ICUAS)}, 2020, pp. 1801--1808.

\bibitem{Abantas_2022}
A.~H.~C. Abantas, W.~A. Sabellona, and C.~J.~O. Salaan, ``Design of a rule-based tuned pid controller for tether management of a suspended tethered uav,'' in \emph{2022 IEEE 14th International Conference on Humanoid, Nanotechnology, Information Technology, Communication and Control, Environment, and Management (HNICEM)}, 2022, pp. 1--6.

\bibitem{Xiao_2019}
X.~Xiao, J.~Dufek, and R.~Murphy, ``Benchmarking tether-based uav motion primitives,'' in \emph{2019 IEEE International Symposium on Safety, Security, and Rescue Robotics (SSRR)}, 2019, pp. 51--55.

\bibitem{liu2021dynamic}
C.~Liu, L.~Ding, and J.~Gu, ``Dynamic modeling and motion stability analysis of tethered uav,'' in \emph{2021 5th International Conference on Robotics and Automation Sciences (ICRAS)}.\hskip 1em plus 0.5em minus 0.4em\relax IEEE, 2021, pp. 106--110.

\bibitem{Kokotovic_1991}
P.~Kokotovic, ``The joy of feedback: nonlinear and adaptive,'' \emph{IEEE Control Systems Magazine}, vol.~12, no.~3, pp. 7--17, 1992.

\bibitem{Slotine_1991}
\BIBentryALTinterwordspacing
J.~Slotine and W.~Li, \emph{Applied Nonlinear Control}.\hskip 1em plus 0.5em minus 0.4em\relax Prentice Hall, 1991. [Online]. Available: \url{https://books.google.com/books?id=cwpRAAAAMAAJ}
\BIBentrySTDinterwordspacing

\bibitem{Madani_2006}
T.~Madani and A.~Benallegue, ``Backstepping control for a quadrotor helicopter,'' in \emph{2006 IEEE/RSJ International Conference on Intelligent Robots and Systems}, 2006, pp. 3255--3260.

\bibitem{Al-Younes_2010}
Y.~M. Al-Younes, M.~A. Al-Jarrah, and A.~A. Jhemi, ``Linear vs. nonlinear control techniques for a quadrotor vehicle,'' in \emph{7th International Symposium on Mechatronics and its Applications}, 2010, pp. 1--10.

\bibitem{talke2018catenary}
K.~A. Talke, M.~De~Oliveira, and T.~Bewley, ``Catenary tether shape analysis for a uav-usv team,'' in \emph{2018 IEEE/RSJ International Conference on Intelligent Robots and Systems (IROS)}.\hskip 1em plus 0.5em minus 0.4em\relax IEEE, 2018, pp. 7803--7809.

\bibitem{fagiano2017systems}
L.~Fagiano, ``Systems of tethered multicopters: modeling and control design,'' \emph{IFAC-PapersOnLine}, vol.~50, no.~1, pp. 4610--4615, 2017.

\bibitem{carrillo2012quad}
L.~R.~G. Carrillo, A.~E.~D. L{\'o}pez, R.~Lozano, and C.~P{\'e}gard, \emph{Quad rotorcraft control: vision-based hovering and navigation}.\hskip 1em plus 0.5em minus 0.4em\relax Springer Science \& Business Media, 2012.

\end{thebibliography}

\begin{IEEEbiography}[{\includegraphics[width=1in,height=1.25in,clip,keepaspectratio]{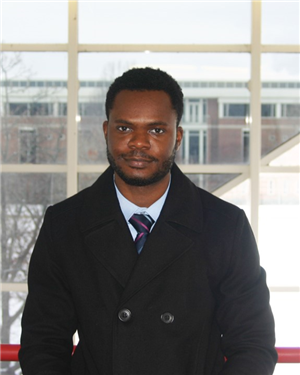}}]{SAMUEL O. FOLORUNSHO is a Graduate member of IEEE and
a PhD student conducting research at the Autonomous and
Unmanned Vehicle Systems Lab (AUVSL) and the Center for Autonomous
Construction and Manufacturing at Scale (CACMS) in the Department of
Systems Engineering at the University of Illinois, Urbana-Champaign (UIUC) under the advisorship of Prof. William R. Norris. His research is focused on control systems, computer vision and robotics - and the intersection of those for
safety-critical systems in industrial and agricultural applications.
He earned his M.S. from UIUC in 2023 and his B.S. in 2017 at
the University of Ilorin, Nigeria both in Agricultural and Biological
Systems Engineering. He has three years of working experience in management consulting.}
\end{IEEEbiography}

\vspace{5pt}

\begin{IEEEbiography}[{\includegraphics[width=1in,height=1.25in,clip,keepaspectratio]{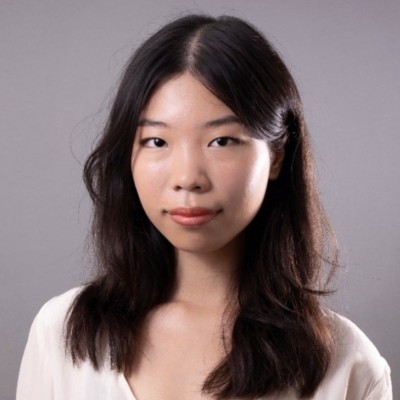}}]{MAGGIE NI is currently pursuing the B.S. degree in Aerospace Engineering at the University of Illinois Urbana-Champaign (UIUC). She is an undergraduate research assistant at the Autonomous and Unmanned Vehicle System Laboratory (AUVSL), founded by Prof. William R. Norris. Her research interests include autonomous systems, UAV navigation, and control systems}
\end{IEEEbiography}

\vspace{5pt}

\begin{IEEEbiography}[{\includegraphics[width=1in,height=1.25in,clip,keepaspectratio]{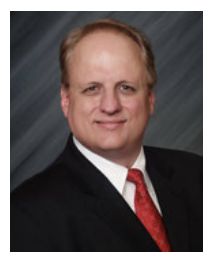}}]{WILLIAM R. NORRIS (Member, IEEE) received
the B.S., M.S., and Ph.D. degrees in systems engineering from the University of Illinois at Urbana–
Champaign, in 1996, 1997, and 2001, respectively,
and the M.B.A. degree from the Fuqua School of
Business, Duke University, in 2007. He has over
23 years of industry experience with autonomous
systems. He is currently a Research Professor with the Industrial and Enterprise Systems
Engineering Department, University of Illinois at
Urbana–Champaign, the Director of the Autonomous and Unmanned Vehicle
System Laboratory (AUVSL), as well as the Founding Director of the Center
for Autonomous Construction and Manufacturing at Scale (CACMS).}
\end{IEEEbiography}

\vfill

\end{document}